\newtheorem{theorem}{Theorem}
\newtheorem{corollary}{Corollary}
\newtheorem*{problem*}{Problem}
\tikzstyle{node}=[minimum size=0.3cm]
\tikzstyle{Z}=[fill={rgb,255:red,230; green,254; blue,230}, draw={rgb,255: red,61; green,77; blue,61}, shape=circle]
\tikzstyle{X}=[fill={rgb,255:red,255; green,135; blue,136}, draw={rgb,255: red,102; green,54; blue,54}, shape=circle]
\tikzstyle{Y}=[fill=zxblue, draw=zxdblue, shape=circle]
\tikzstyle{Z_big}=[fill={rgb,255:red,230; green,254; blue,230}, draw={rgb,255: red,61; green,77; blue,61}, shape=circle, minimum width=1.6em, font={\small}]
\tikzstyle{X_big}=[fill={rgb,255:red,255; green,135; blue,136}, draw={rgb,255: red,102; green,54; blue,54}, shape=circle, minimum width=1.6em, font={\small}]
\tikzstyle{Y_big}=[fill=zxblue, draw=zxdblue, shape=circle, minimum width=1.6em, font={\small}]
\tikzstyle{Z_tri}=[fill={rgb,255:red,230; green,254; blue,230}, draw={rgb,255: red,61; green,77; blue,61}, regular polygon, regular polygon sides=3, draw, shape border rotate=0, inner sep=0pt, minimum width=15pt, line width=0.75]
\tikzstyle{X_tri}=[fill={rgb,255:red,255; green,135; blue,136}, draw={rgb,255: red,102; green,54; blue,54}, regular polygon, regular polygon sides=3, draw, shape border rotate=0, inner sep=0pt, minimum width=15pt, line width=0.75]
\tikzstyle{Z_tri_inv}=[fill={rgb,255:red,230; green,254; blue,230}, draw={rgb,255: red,61; green,77; blue,61}, regular polygon, regular polygon sides=3, draw, shape border rotate=180, inner sep=0pt, minimum width=15pt, line width=0.75, font={\footnotesize}]
\tikzstyle{X_tri_inv}=[fill={rgb,255:red,255; green,135; blue,136}, draw={rgb,255: red,102; green,54; blue,54}, regular polygon, regular polygon sides=3, draw, shape border rotate=180, inner sep=0pt, minimum width=15pt, line width=0.75]
\tikzstyle{Z_tri_l}=[fill={rgb,255:red,230; green,254; blue,230}, draw={rgb,255: red,61; green,77; blue,61}, regular polygon, regular polygon sides=3, draw, shape border rotate=90, inner sep=0pt, minimum width=15pt, line width=0.75]
\tikzstyle{X_tri_l}=[fill={rgb,255:red,255; green,135; blue,136}, draw={rgb,255: red,102; green,54; blue,54}, regular polygon, regular polygon sides=3, draw, shape border rotate=90, inner sep=0pt, minimum width=15pt, line width=0.75]
\tikzstyle{H}=[fill=yellow, draw=black, shape=rectangle, minimum width=2mm, minimum height=2mm]
\tikzstyle{Y_box}=[draw=black, shape=rectangle, minimum width=2mm, minimum height=2mm, tikzit fill={rgb,255: red,255; green,128; blue,0}]
\tikzstyle{Z_med}=[fill={rgb,255:red,230; green,254; blue,230}, draw={rgb,255: red,61; green,77; blue,61}, shape=circle, minimum width=1.1em, font={\footnotesize}]
\tikzstyle{X_med}=[fill={rgb,255:red,255; green,135; blue,136}, draw={rgb,255: red,102; green,54; blue,54}, shape=circle, minimum width=1.1em, font={\footnotesize}]
\tikzstyle{Y_med}=[fill=zxblue, draw=zxdblue, font={\footnotesize}, minimum width=1.1em, font={\footnotesize}, shape=circle]
\tikzstyle{ZP}=[fill={rgb,255:red,230; green,254; blue,230}, draw={rgb,255: red,61; green,77; blue,61}, regular polygon, regular polygon sides=3, shape border rotate=30]
\tikzstyle{ZM}=[fill={rgb,255:red,230; green,254; blue,230}, draw={rgb,255: red,61; green,77; blue,61}, regular polygon, regular polygon sides=3, shape border rotate=-30]
\tikzstyle{XP}=[fill={rgb,255:red,255; green,135; blue,136}, draw={rgb,255: red,102; green,54; blue,54}, regular polygon, regular polygon sides=3, shape border rotate=30]
\tikzstyle{XM}=[fill={rgb,255:red,255; green,135; blue,136}, draw={rgb,255: red,102; green,54; blue,54}, regular polygon, regular polygon sides=3, shape border rotate=-30]
\tikzstyle{small_box}=[fill=white, draw=black, shape=rectangle, minimum width=1.5cm, minimum height=1.5cm, font={\footnotesize}]
\tikzstyle{med_rectangle}=[fill=white, draw=black, shape=rectangle, minimum width=2.8cm, minimum height=1.5cm, font={\footnotesize}]
\tikzstyle{big_rectangle}=[fill=white, draw=black, shape=rectangle, minimum width=4.5cm, minimum height=1.5cm, font={\footnotesize}]
\tikzstyle{smol_box}=[fill=white, draw=black, shape=rectangle, minimum width=1cm, minimum height=1cm]
\tikzstyle{poo}=[minimum height=0.7cm, minimum width=0.7cm, path picture={\node at (path picture bounding box.center) {\includegraphics[width=0.7cm] {}};}]
\tikzstyle{Z_long}=[fill={rgb,255:red,230; green,254; blue,230}, draw={rgb,255: red,61; green,77; blue,61}, shape=rectangle, rounded corners=0.25cm, minimum height=0.5cm, inner sep=0.25em]
\tikzstyle{X_long}=[fill={rgb,255:red,255; green,135; blue,136}, draw={rgb,255: red,102; green,54; blue,54}, shape=rectangle, rounded corners=0.25cm, minimum height=0.5cm, inner sep=0.25em]
\tikzstyle{med_rectv}=[fill=white, draw=black, shape=rectangle, minimum width=1.1cm, minimum height=2.4cm, font={\scriptsize}, inner sep=0.2cm]
\tikzstyle{Z dot}=[fill={rgb,255: red,0; green,127; blue,0}, draw=black, shape=circle, minimum width=1.5mm]
\tikzstyle{X dot}=[fill={rgb,255:red,255; green,21; blue,0}, draw=black, shape=circle, minimum width=1.5mm]
\tikzstyle{Z phase dot}=[fill={rgb,255: red,0; green,127; blue,0}, draw=black, shape=circle, font={\footnotesize}]
\tikzstyle{X phase dot}=[fill={rgb,255:red,255; green,21; blue,0}, draw=black, shape=circle, font={\footnotesize}]
\tikzstyle{ZYa}=[draw=black, shape=rectangle, rectangle split, rectangle split parts=2, rectangle split horizontal, rectangle split part fill={zxgreen, zxblue}, rectangle split draw splits=false, minimum height=2mm, font={\tiny}]
\tikzstyle{YZ}=[draw=black, shape=rectangle, rectangle split, rectangle split parts=2, rectangle split horizontal, rectangle split part fill={zxblue, zxgreen}, rectangle split draw splits=false, minimum height=2mm, font={\tiny}]
\tikzstyle{XYa}=[draw=black, shape=rectangle, rectangle split, rectangle split parts=2, rectangle split horizontal, rectangle split part fill={zxred, zxblue}, rectangle split draw splits=false, minimum height=2mm, font={\tiny}]
\tikzstyle{YX}=[draw=black, shape=rectangle, rectangle split, rectangle split parts=2, rectangle split horizontal, rectangle split part fill={zxblue, zxred}, rectangle split draw splits=false, minimum height=2mm, font={\tiny}]
\tikzstyle{dashs}=[-, dashed, line width=0.15mm]
\tikzstyle{thick}=[-, line width=0.5mm]
\tikzstyle{arrow}=[->]
\DeclareMathOperator*{\xor}{\scalerel*{\oplus}{\sum}}
\newcommand{\bra}[1]{\langle#1|}
\newcommand{\ket}[1]{|#1\rangle}
\DeclareMathOperator*{\Motimes}{\text{\raisebox{0.25ex}{\scalebox{0.9}{$\bigotimes$}}}}
\pgfplotsset{compat=1.15}
\definecolor{zxgreen}{RGB}{230,254,230}
\definecolor{zxred}{RGB}{255,135,136}
\definecolor{zxblue}{RGB}{116,116,235} 
\definecolor{zxdgreen}{RGB}{91,107,91}
\definecolor{zxdred}{RGB}{142,94,94}
\definecolor{zxdblue}{RGB}{61,61,77}
\newcommand{\eq}[1]{\mathrel{\overset{\makebox[0pt]{\mbox{\normalfont\tiny\sffamily #1}}}{=}}}
\theoremstyle{definition}
\newtheorem{definition}{Definition}[section]
\begin{document}
\def\YZ{\textcolor{zxblue}{-}\nodepart{two}\textcolor{zxgreen}{-}}
\def\YX{\textcolor{zxblue}{-}\nodepart{two}\textcolor{zxred}{-}}

\def\tket{\textsf{t$|$ket$\rangle$}}
\pagenumbering{gobble}
\vspace*{\fill}
\begin{center}
\LARGE{\textbf{Diagrammatic Design and Study of Ans\"{a}tze for Quantum Machine Learning}} \\

\vspace{4cm}
\includegraphics[height=3cm]{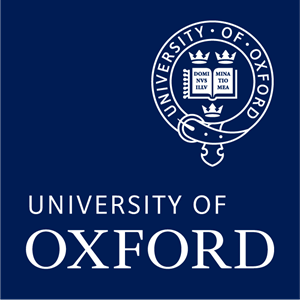}
\vspace{3cm}

\large{Richie Yeung}\\
\normalsize St.\ Cross College\\
University of Oxford 

\vspace{3cm}

A thesis submited for the degree of \\
\textit{MSc Computer Science}

Trinity 2020
\end{center}
\vspace*{\fill}
\newpage 
\vspace{1cm}
\section*{Proforma}

{\large
\begin{tabular}{ll}
Candidate Number:   & \bf 1040476 \\
Project Title:      & \bf {Diagrammatic Design and Study of Ans\"{a}tze} \\
		    & \bf {for Quantum Machine Learning} \\
Examination:        & \bf MSc Computer Science 2020  \\
Word Count:         & \bf 11767\footnotemark[1]  \\
Diagram Count:      & \bf 435\footnotemark[2]  \\
\end{tabular}
}

\footnotetext[1]{This word count was computed
by \texttt{detex diss.tex | tr -cd '0-9A-Za-z $\tt\backslash$n' | wc -w}.
}
\footnotetext[2]{This diagram count was computed
by \texttt{find . -name *.tikz | wc -l}.
}
\stepcounter{footnote}

\newpage
\vspace*{\fill}
\vspace{-10mm}
\begin{center}
\subsection*{Acknowledgements}
\parbox{10.5cm}{
Thank you Bob Coecke for your lectures on QCS. 

Thank you Stefano Gogioso for your patient supervision. 

Thank you Aleks Kissinger for your extensive help throughout the project, and for developing PyZX and TikZiT. 

Thank you Ian Fan for listening to my ramblings on ZX calculus and for proofreading my dissertation. 

Finally, thank you to all of my friends and family for supporting me through this challenging and unusual year.
}
\end{center}
\vspace*{\fill}
\newpage
\vspace*{\fill}
\begin{center}\subsection*{Abstract}\end{center}
Given the rising popularity of quantum machine learning (QML), it is important to develop techniques that effectively simplify commonly adopted families of parameterised quantum circuits (commonly known as ans\"{a}tze).
This project pioneers the use of diagrammatic techniques to reason with QML ans\"{a}tze. We take commonly used QML ans\"{a}tze and convert them to diagrammatic form and give a full description of how these gates commute, making the circuits much easier to analyse and simplify. Furthermore, we leverage a combinatorial description of the interaction between CNOTs and phase gadgets to analyse a periodicity phenomenon in layered ans\"{a}tze and also to simplify a class of circuits commonly used in QML.

\vspace*{\fill}
\newpage

\pagenumbering{arabic}

\tableofcontents

\pagestyle{headings}
\pagenumbering{arabic}

\chapter{Introduction}

\section{Background}

Quantum computing recently received mainstream attention when Google demonstrated quantum supremacy \autocite{supreme} using a 53-qubit superconducting quantum computer, but when can quantum supremacy actually be applied in the real world?  While quantum algorithms such as Shor's algorithm and Grover's algorithm appear to have a ``quantum advantage" over their classical counterparts, current quantum computers are of insufficient size and noise tolerance to apply these algorithms to real world scenarios.

Seeing that quantum hardware is still at the ``mainframe" stage, why are we already investigating quantum machine learning? While traditional quantum algorithms require more memory to store its input data, in quantum machine learning one could first ``compress" the input using classical computation before feeding it to the quantum computer; the output of the quantum computer can then be interpreted and used to make a prediction, again using classical computation. The techniques for dimensionality reduction, compression and reconstruction of data have already been extensively studied in the fields of machine learning and statistics. Since these learning techniques are designed to be noise tolerant, they can cope with the noise introduced by a quantum computer.
It is for these reasons we will see quantum machine learning being applied to real world scenarios before 
traditional quantum algorithms.  Furthermore, experimental evidence shows that even relatively small \textit{parameterised quantum circuits} can learn vastly complicated functions that would require a 
much larger classical neural network to learn \autocite{expressive, IQP}.

Quantum computers are a natural tool of choice for simulating processes that are inherently quantum, 
and using quantum machine learning to tackle problems with quantum data may allow us 
to better understand the structure of atoms and molecules and discover new drugs. 
In the field of quantum chemistry, the wavefunction of a water molecule can be naturally encoded using 
50 qubits, whilst on a classical computer the wavefunction is obtained by diagonalising a $\approx10^{11}*10^{11}$ 
Hamiltonian matrix. Recently, Google performed an experiment which models the isomerisation of diazene \autocite{google2020hartree}.

Another application of quantum machine learning is the emerging field of quantum natural language processing (QNLP): the compositional structure of categorical quantum mechanics coincide with the structures used to describe the categorical grammar developed by Lambek \autocite{zeng2016quantum, lambek2008word, coecke2013alternative}, making quantum computers a naturally suited to implement such ``meaning-aware" models of natural language. Quantum machine learning can be used to produce a quantum word embedding, 
which encodes the meaning of a word into a quantum state. Currently the technology of quantum RAM has not fully developed, so it is necessary to learn the word meanings on the fly instead of loading it from memory \autocite{meichanetzidis2020quantum, mediumpost}. This highlights the importance of designing good parameterised quantum circuits for quantum machine learning.

This project develops and applies diagrammatic techniques to analyse parameterised quantum circuits, in order to gain a understanding and intuition behind what makes a good variational quantum circuit.
\begin{itemize}
	\item \textbf{Chapter 2} introduces diagrammatic notation to describe quantum machine learning, and converts commonly used gates from quantum machine learning into ZX calculus.
	\item \textbf{Chapter 3} builds upon the work in chapter 2 and develops more techniques for manipulating parameterised quantum circuits, in particular using phase gadgets and Pauli gadgets.
	\item \textbf{Chapter 4} acts as a reference for diagrammatic quantum machine learning by applying the diagrammatic techniques from chapters 2 and 3 to nineteen quantum circuits taken from quantum machine learning literature. 
	\item \textbf{Chapter 5} describes an optimisation routine that can be applied to a class of circuits commonly used in quantum machine learning.
\end{itemize}

\chapter{Diagrammatic QML}
Before we apply diagrammatic techniques to analyse quantum machine learning, we first provide a diagrammatic presentation of quantum machine learning. 
We convert the elementary gates used in quantum circuits to diagrammatic form, first using ZX calculus in section \ref{zx def}, then later using the trichromatic language developed in section \ref{y}. To work entirely within the framework of diagrammatic calculus, section \ref{diadiff} introduces a novel method to differentiate circuits diagrammatically, which can be used to obtain the Hamiltonian of a system entirely diagrammatically. The method also allows us to diagrammatically describe a typical framework for performing quantum machine learning in section \ref{qnndiff}.

\section{Dirac notation}\label{dirac}
This section defines the quantum gates used by popular quantum computing libraries such as PennyLane 
\autocite{pennylane}, Cirq \autocite{cirq} and Qiskit \autocite{Qiskit-Textbook}, using matrix and Dirac notation.

Define $|x\rangle$ as the column vector with 1 in the $x$th coordinate and 0 everywhere else, and $\langle x|$ as its adjoint. As a special case, define 
$\ket{+}=\frac{1}{\sqrt{2}} (\ket{0}+\ket{1})$ and $\ket{-}=\frac{1}{\sqrt{2}} (\ket{0}-\ket{1})$, which represents the Hadamard basis. The Hadamard matrix $H$ serves as an involutive transformation from the standard computational basis to the Hadamard basis: 
\begin{align*}
H &= \ket{0}\bra{+} + \ket{1}\bra{-} \\
  &= \ket{+}\bra{0} + \ket{-}\bra{1} \\
  &= \frac{1}{\sqrt{2}} \begin{pmatrix}1&1\\1&-1\end{pmatrix} 
\end{align*} 

The $Z, X, Y$ Pauli matrices are unitary and Hermitian, and when combined with identity matrix $I$, form 
a basis for $2\times 2$ Hermitian matrices. It is for this reason that the identity matrix is considered 
the fourth Pauli matrix.

$$
Z=\begin{pmatrix}1&0\\0&-1\end{pmatrix} \qquad
X=\begin{pmatrix}0&1\\1&0\end{pmatrix} \qquad
Y=\begin{pmatrix}0&-i\\i&0\end{pmatrix} \qquad
I=\begin{pmatrix}1&0\\0&1\end{pmatrix}
$$
	
The rotation matrices in the Z, X, Y basis are similar, and hence are related through a change of basis. They are related to the Pauli matrices through Stone's theorem 
(See section \ref{stones}). 

\begin{align*}
R_{z}(\phi) &= e^{-i\phi /2}\ket{0}\bra{0} + e^{i\phi /2}\ket{1}\bra{1} \\
		    &= \begin{pmatrix}
e^{-i\phi /2} & 0\\
0 & e^{i\phi /2}
\end{pmatrix} \\ \\
R_{x}( \phi ) &= HR_{z}(\phi)H \\
	      &= e^{-i\phi /2}\ket{+}\bra{+} + e^{i\phi /2}\ket{-}\bra{-} \\
	      &= \begin{pmatrix}
\cos( \phi /2) & -i\sin( \phi /2) \\
-i\sin( \phi /2) & \cos( \phi /2)
\end{pmatrix} \\ \\
R_{y}( \phi ) &= R_{x}(-\pi /2) R_{z}(\phi) R_{x}(\pi /2) \\
	      &= R_{z}(\pi /2) R_{x}(\phi) R_{z}(-\pi /2) \\
	      &= \begin{pmatrix}
\cos( \phi /2) & -\sin( \phi /2)\\
\sin( \phi /2) & \cos( \phi /2)
\end{pmatrix}.
\end{align*}

Controlled gates, such as the controlled X and controlled Z gates, apply their operation on the target qubit when the control qubit is set to 1. They can be written in matrix form by expanding their action on the computational basis:
\begin{center}
$\begin{aligned}
\mathrm{CX} & =\ket{00}\bra{00} +\ket{01}\bra{01} & \qquad \mathrm{CZ} & =\ket{00}\bra{00} +\ket{01}\bra{01}\\
 & + \ket{11}\bra{10} +\ket{10}\bra{11} &  & + \ket{10}\bra{10} -\ket{11}\bra{11}\\
 & =\begin{pmatrix}
1 & 0 & 0 & 0\\
0 & 1 & 0 & 0\\
0 & 0 & 0 & 1\\
0 & 0 & 1 & 0
\end{pmatrix} &  & =\begin{pmatrix}
1 & 0 & 0 & 0\\
0 & 1 & 0 & 0\\
0 & 0 & 1 & 0\\
0 & 0 & 0 & -1
\end{pmatrix}
\end{aligned}$
\end{center}

\section{ZX calculus}\label{zx def}
This section introduces the ZX calculus developed by Coecke and Duncan \autocite{zx}, which is an extension of the categorical quantum mechanics developed by Abramsky and Coecke \autocite{cqm}. Categorical quantum mechanics utilises dagger 7ymmetric monoidal categories as a rigorous framework for reasoning with quantum processes.

\subsection{ZX Spiders} \label{zx}

\begin{align*}
    \left\llbracket\tikzfig{zx/z_spider}\right\rrbracket &\eq{z def} e^{-i \alpha/2}|0 \ldots 0\rangle\langle 0 \ldots 0|+e^{i \alpha/2}|1 \ldots 1\rangle\langle 1 \ldots 1| \\
    \left\llbracket\tikzfig{zx/x_spider}\right\rrbracket &\eq{x def} e^{-i \alpha/2}|{+}\ldots{+}\rangle\langle{+}\ldots{+}|+e^{i \alpha/2}|{-}\ldots{-}\rangle\langle{-}\ldots{-}| 
\end{align*}
ZX diagrams are constructed using the green Z spiders and the red X spiders defined above. Unless specified, diagrams in this work are read from left to right, with the left wires as the input wires and the right wires as the output wires. Spiders with $\alpha = 0$ have their phases omitted for brevity.
The double square brackets represents the \textit{interpretation functor}, 
which converts the diagrams back to the usual Hilbert space notation. 
The interpretation functor is contravariant, so $\llbracket D_1 \cdot D_2 \rrbracket = \llbracket D_2 \rrbracket \circ \llbracket D_1 \rrbracket$.

Z and X spiders are the universal building blocks for quantum computation ---
all quantum operations can be constructed using these two spiders. 
For starters, we have the RZ and RX rotation gates:
\begin{align*}
	\left\llbracket\tikzfig{zx/rz}\right\rrbracket &= e^{-i \alpha/2}\ket{0}\bra{0} + e^{i \alpha/2}\ket{1}\bra{1} \\
	\left\llbracket\tikzfig{zx/rx}\right\rrbracket &= e^{-i \alpha/2}\ket{{+}}\bra{{+}} + e^{i \alpha/2}\ket{{-}}\bra{{-}} 
\end{align*}
The Z and X Pauli matrices can be viewed as $\pi$ rotation gates, up to a phase:
\begin{align*}
	\llbracket\tikzfig{zx/pz}\rrbracket &= -i\ket{0}\bra{0} + i\ket{1}\bra{1} = -iZ \\
	\llbracket\tikzfig{zx/px}\rrbracket &= -i\ket{{+}}\bra{{+}} + i\ket{{-}}\bra{{-}} \\
					    &= -i\ket{0}\bra{1} - i\ket{1}\bra{0} = -iX
\end{align*}
The CNOT and CZ gates are written as a composition of multiple spiders. By interpreting the components of the gates in Dirac notation, one can verify that they do combine to match the original definition in section \ref{dirac}.
\begin{align*}
\mathrm{CNOT} = \left\llbracket\tikzfig{zx/check/cx}\right\rrbracket &= 
\left\llbracket\tikzfig{zx/check/cx_1} \bullet \tikzfig{zx/check/cx_2} \right\rrbracket\\
				      &= \left\llbracket\tikzfig{zx/check/cx_2}\right\rrbracket \circ \left\llbracket\tikzfig{zx/check/cx_1}\right\rrbracket 
\end{align*}
\begin{align*}
\mathrm{CZ} = \left\llbracket\tikzfig{zx/check/cz}\right\rrbracket &= 
\left\llbracket\tikzfig{zx/check/cx_1} \bullet \tikzfig{zx/check/cz_1} \right\rrbracket\\
				      &= \left\llbracket\tikzfig{zx/check/cz_1}\right\rrbracket \circ \left\llbracket\tikzfig{zx/check/cx_1}\right\rrbracket 
\end{align*}
\begin{align*}
    \left\llbracket\tikzfig{zx/check/cx_1}\right\rrbracket
                            &= (\ket{00}\bra{0} + \ket{11}\bra{1})\otimes(\ket{0}\bra{0}+\ket{1}\bra{1})     \\ 
			    &= \ket{000}\bra{00} + \ket{011}\bra{01} + \ket{100}\bra{10} + \ket{111}\bra{11}     \\\\
    \left\llbracket\tikzfig{zx/check/cz_1}\right\rrbracket
			    &= (\ket{0}\bra{0}+\ket{1}\bra{1})\otimes (\ket{0}\bra{00} + \ket{1}\bra{11})
    				(I\otimes H\otimes I)\\
			    &= \ket{00}\bra{000} + \ket{00}\bra{010} + \ket{01}\bra{001} + \ket{01}\bra{011} \\
		            & \:\quad + \ket{10}\bra{100} - \ket{10}\bra{110} + \ket{11}\bra{101} - \ket{11}\bra{111} 
\end{align*}
\begin{align*}
    \left\llbracket\tikzfig{zx/check/cx_2}\right\rrbracket
			    &= (\ket{0}\bra{0}+\ket{1}\bra{1})\otimes(\ket{{+}}\bra{{+}{+}} + \ket{{-}}\bra{{-}{-}})     \\ 
			    &= (\ket{0}\bra{0}+\ket{1}\bra{1})\otimes
			    (\ket{0}\bra{00} + \ket{1}\bra{01} + \ket{1}\bra{10} + \ket{0}\bra{11})     \\ 
			    &= \ket{00}\bra{000} + \ket{01}\bra{001} + \ket{01}\bra{010} + \ket{00}\bra{011}  \\
			    & \:\quad + \ket{10}\bra{100} + \ket{11}\bra{101} + \ket{11}\bra{110} + \ket{10}\bra{111} 
\end{align*}
\begin{center}
$\begin{aligned}
\mathrm{CX} & =\ket{00}\bra{00} +\ket{01}\bra{01} & \qquad \mathrm{CZ} & =\ket{00}\bra{00} +\ket{01}\bra{01}\\
 & + \ket{11}\bra{10} +\ket{10}\bra{11} &  & + \ket{10}\bra{10} -\ket{11}\bra{11}\\
\end{aligned}$
\end{center}

ZX diagrams are invariant under spatial isotopy: if one diagram can be deformed into another, then they are equal.
By straightening the wires we get a representation that more resembles a gate in conventional circuit notation.
\begin{align*}
\mathrm{CNOT} &= \tikzfig{zx/check/cx} = \tikzfig{zx/cnot} \\
\mathrm{CZ} &= \tikzfig{zx/check/cz} = \tikzfig{convert_all/cz1}
\end{align*}
The Hadamard gate can be written equivalently in the following four ways; they can be verified by expanding the definitions of the spiders.
\begin{align*}
	\tikzfig{zx/h}\eq{h def}\tikzfig{zx/h_def_1} &= \tikzfig{zx/h_def_2} \\
                  =\tikzfig{zx/h_def_3} &= \tikzfig{zx/h_def_4}
\end{align*}
Even the Z and X basis states can be expressed directly using spiders:
the Z basis states can be expressed using X spiders and the X basis states can be expressed using Z spiders. 
\begin{align*}
	\ket{+} &= \llbracket\tikzfig{zx/X0}\rrbracket = \ket{0} + \ket{1} = \llbracket\tikzfig{zx/X0S}\rrbracket \\
	\ket{0} &= \llbracket\tikzfig{zx/Z0}\rrbracket = \ket{+} + \ket{-} = \llbracket\tikzfig{zx/Z0S}\rrbracket \\
	\ket{-} &= \llbracket\tikzfig{zx/X1}\rrbracket = \ket{0} - \ket{1} = -i\llbracket\tikzfig{zx/X1S}\rrbracket \\
	\ket{1} &= \llbracket\tikzfig{zx/Z1}\rrbracket = \ket{+} - \ket{-} = -i\llbracket\tikzfig{zx/Z1S}\rrbracket
\end{align*}
All of the above definitions can be checked to match the matrix unitary definitions by expanding the bra-ket notation. This has been done to ensure the work done in this project matches to the quantum gates defined in PennyLane, Cirq and Qiskit.

Not only can all unitary functions can be expressed in this diagrammatic notation, but the diagrams can be manipulated
using the rules for ZX calculus.
Here we have presented the rules for a scalar-free ZX calculus: since a quantum circuit must be unitary, working with a scalar-free ZX calculus allows us to preserve the circuit up to a global non-zero scalar, which does not alter the distribution of observed states.

\begin{center}
	\subsection*{The ZX Calculus}
\begin{tabular}{ cc }
    Fusion Rule & Hadamard Rule \\
    $ \tikzfig{zx/rules/z_fuse1} \eq{f} \tikzfig{zx/rules/z_fuse2} $ &
$ \tikzfig{zx/rules/h1} \eq{h} \tikzfig{zx/rules/h2} $ \\ \\
    Identity Rules & $\pi$-copy Rule \\
$\begin{aligned} \tikzfig{zx/rules/i11} \eq{i1} \tikzfig{zx/rules/i12} \\
     \tikzfig{zx/rules/i21} \eq{i2} \tikzfig{zx/rules/i22} \end{aligned}$ &
    $ \tikzfig{zx/rules/pi1} \eq{$\pi$} \tikzfig{zx/rules/pi2} $ \\ \\
    Copy Rule & Bialgebra Rule \\
    $ \tikzfig{zx/rules/c1} \eq{c} \tikzfig{zx/rules/c2} $ &
$ \tikzfig{zx/rules/b1} \eq{b} \tikzfig{zx/rules/b2} $ \\
\end{tabular}
\end{center}

The ZX calculus remains true under colour-swapping: by applying the Hadamard gate to all input and output legs, we see that the rules above are invariant under a Hadamard change of basis. The ZX calculus is also invariant under spatial isotopy: if two ZX diagrams can be deformed into one another, then they are equal. 

We have presented a sound but incomplete version of ZX calculus as it is easier to work with; a complete version is presented by Wang \autocite{wang2018completeness}. Regardless, we can show many useful identities, such as the complementarity rule: 
\vspace{-0.75cm}
\begin{center}
    $$ \tikzfig{zx/rules/comp0} \eq{comp} \tikzfig{zx/rules/comp4} $$
    $$ \tikzfig{zx/rules/comp0} \eq{f, id1} 
       \tikzfig{zx/rules/comp1} \eq{b} \tikzfig{zx/rules/comp2} \eq{c}
       \tikzfig{zx/rules/comp3} \eq{f, id1} \tikzfig{zx/rules/comp4} $$
\end{center}
The complementarity rule can be used to prove that a SWAP gate can be implemented by three CNOT gates: 

\ctikzfig{zx/swap}

By writing the basis states in terms of spiders, it is clear that the X Pauli gate acts as the NOT gate for the Z basis and the Z Pauli gate acts as the NOT gate for the X basis. In the general case, writing the basis states in terms of spiders shows that the 1-to-$n$ Z spider acts as a COPY gate and the $n$-to-1 X spider acts as a XOR gate to the computational Z basis.

\[ \tikzfig{zx/copy1} = \tikzfig{zx/copy2} \eq{$\pi$, c} \tikzfig{zx/copy3} = \tikzfig{zx/copy4} \]
\[ \tikzfig{zx/xor1} = \tikzfig{zx/xor2} \eq{f} \tikzfig{zx/xor3} = \tikzfig{zx/xor4} \]
\[ \sigma = x \oplus y \oplus z \]

\subsection{Y Spiders}\label{y}
Although ZX calculus, a graphical language describing the interaction of two complementary observables, 
and is sufficient to describe any quantum process, it is possible and perhaps more elegant to work with three 
complementary observables. In fact, it has been shown that one can fit three complementary 1-qubit observables and no more. \autocite{wocjan2004new}
We start by introducing a third basis, the Y basis. Unlike the Z and X bases, the Y basis is not Hermitian.
\begin{align*}
	\ket{R} &= \frac{1}{\sqrt{2}}(\ket{0} +i \ket{1}) &
	\bra{R} &= \frac{1}{\sqrt{2}}(\bra{0} -i \bra{1}) \\
	\ket{L} &= \frac{1}{\sqrt{2}}(\ket{0} -i \ket{1}) &
	\bra{L} &= \frac{1}{\sqrt{2}}(\bra{0} +i \bra{1})
\end{align*}
The isomorphism between $SO(3)$ and the quotient group $SU(2)/\{\pm I\}$ \autocite{su2so3} allows us to visualise the qubit in the Bloch sphere. 
Here we adopt a right-handed convention by choosing $\ket{R}$ to be the first Y basis --- this results in the 
principal basis vectors of the XYZ bases to be arranged in a right-handed manner.
The Y spider is thus given as: 
\begin{align*}
    \left\llbracket\tikzfig{yrules/y_spider}\right\rrbracket &\eq{y def} e^{-i \alpha/2}|R \ldots R\rangle\langle R \ldots R|+e^{i \alpha/2}|L \ldots L\rangle\langle L \ldots L|
\end{align*}
The right-handed change of basis maps are the given by the $H_{XY}$, $H_{YZ}$ and $H_{ZX}$ Hadamards, and 
the left-handed change of basis maps are given by the $H_{ZY}$, $H_{YX}$ and $H_{XZ}$ Hadamards. 
\begin{align*}
	H_{XY} &= R_x\left(\frac{-\pi}{2}\right)HR_x\left(\frac{\pi}{2}\right) \\
	H_{YZ} &= R_z\left(\frac{\pi}{2}\right)HR_z\left(\frac{-\pi}{2}\right) \\\\
	H_{YX} &= R_x\left(\frac{\pi}{2}\right)HR_x\left(\frac{-\pi}{2}\right) \\
	H_{ZY} &= R_z\left(\frac{-\pi}{2}\right)HR_z\left(\frac{\pi}{2}\right) 
\end{align*}
\[ H_{XY} = H_{YX}^T \qquad H_{YZ} = H_{ZY}^T \]
The $H_{ZX}$ and $H_{XZ}$ Hadamards coincide and correspond to the usual Hadamard matrix $H$, which is symmetric.
The other Hadamards are not symmetric and taking the transpose switches the chirality of the map. 
These properties, combined with the fact that all the Hadamards gates are Hermitian lead us to the following 
diagrammatic representation of the Hadamards:
\begin{figure}
\begin{center}\includegraphics[width=7.5cm]{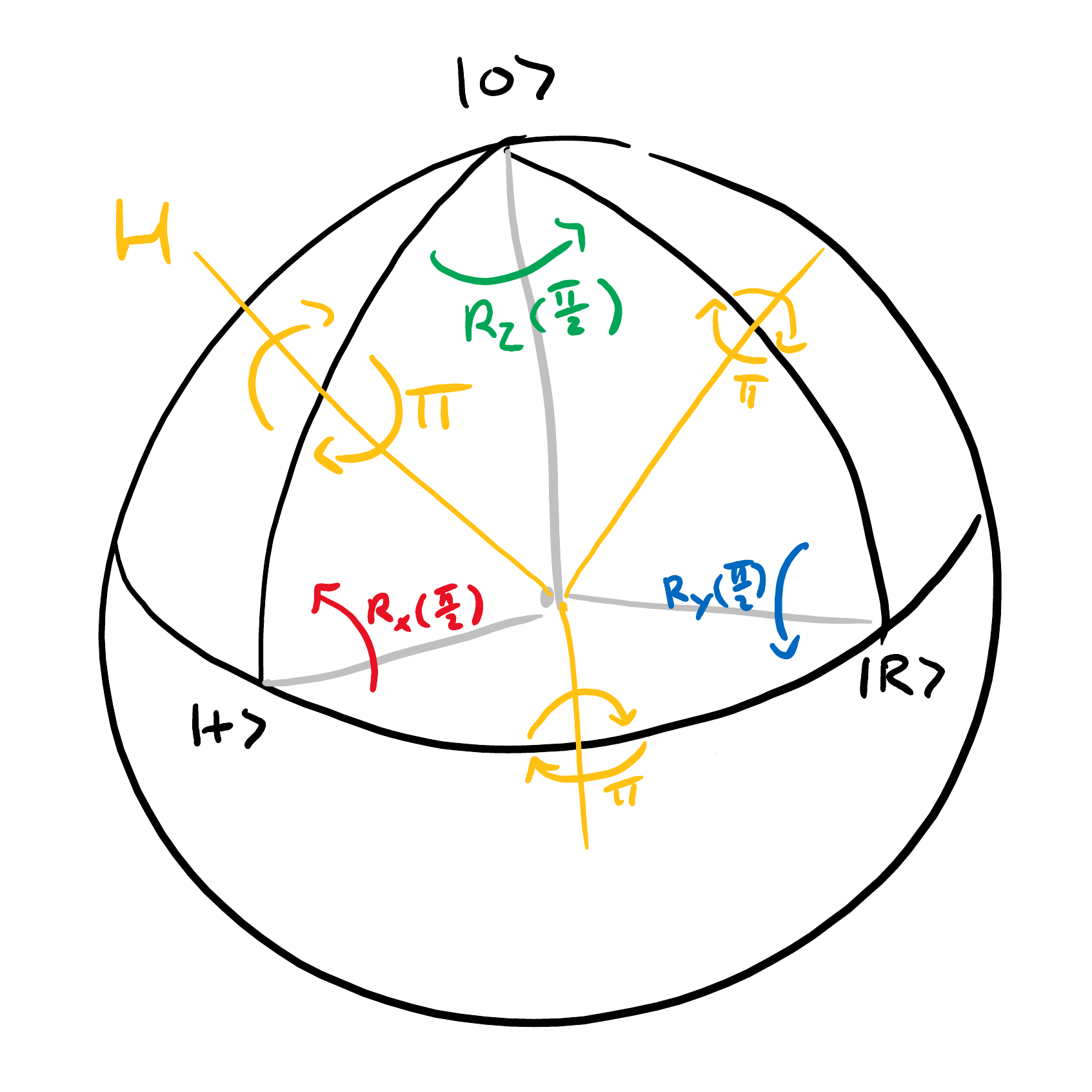}\end{center}
\caption{The Bloch sphere annotated with the 3 principal axes, the basis rotations and Hadamards (yellow) which act as basis transformations. Note that the X, Y, Z axes are arranged in a right-handed manner. (Diagram by Stefano Gogioso)}
\end{figure}

\begin{align*}
	\text{Right Handed } \quad H_{XY}&=\tikzfig{yrules/yx} \\
				   H_{YZ}&=\tikzfig{yrules/yz} \\\\
	\text{Left Handed } \quad  H_{YX}&=\tikzfig{yrules/xy} \\
				   H_{ZY}&=\tikzfig{yrules/zy} 
\end{align*}

\begin{figure}
\begin{center}\includegraphics{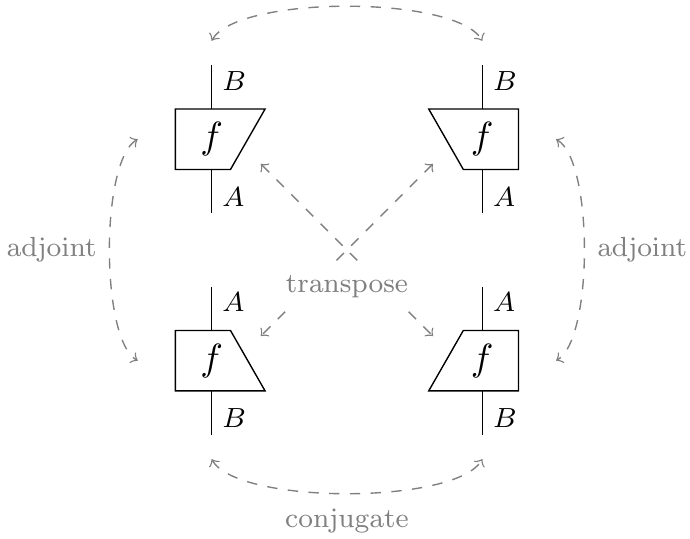}\end{center}
\caption{The geometric transforms to obtain the adjoint, transpose and conjugate of a linear map. Diagram taken from \textit{Picturing Quantum Processes} \autocite{coecke_kissinger}.}
\end{figure}

One can deduce the name of the gate by reading the colours of the gate from top to bottom. 
Due to the vertical line of symmetry, these gates remain unchanged when reflected vertically, 
the geometric operation for the adjoint, which reflects their Hermitian properties. 
Furthermore, the chirality is switched when the gate is reflected horizontally, which is 
the geometric operation for the transpose.
\begin{align*}
    \tikzfig{yrules/zy_yank1} &= \tikzfig{yrules/zy_yank2} 
                              &\tikzfig{yrules/yz_yank1}
                              &= \tikzfig{yrules/yz_yank2} \\
    \tikzfig{yrules/xy_yank1} &= \tikzfig{yrules/yx_yank2} 
                              &\tikzfig{yrules/yx_yank1} 
                              &= \tikzfig{yrules/xy_yank2} 
\end{align*}
It is also worth pointing out that the cups and caps of the Y basis are different from the usual cups and caps 
used in ZX calculus. In ZX calculus, the Z and X cups (and caps) coincide so it is acceptable to omit the spider. 
Pushing a gate through a cup (or a cap) applies the transpose in that basis to the gate, on the other qubit.
The fact that the Y cups (and caps) are different from the Z/X cups (and caps) tells us that the transpose is a basis-dependent operation.
Unless specified, the cups and caps are the usual ones in the Z and X basis.
$$ \left\llbracket\tikzfig{yrules/zcup}\right\rrbracket = \ket{00} + \ket{11} $$
$$ \left\llbracket\tikzfig{yrules/xcup}\right\rrbracket = |{+}{+}\rangle + |{-}{-}\rangle $$
$$ \tikzfig{yrules/zcup} = \tikzfig{yrules/cup} = \tikzfig{yrules/xcup} $$
$$ \tikzfig{yrules/ycup} \neq \tikzfig{yrules/cup} $$

The Y spider can be decomposed using the new Hadamard gates:
$$ \tikzfig{yrules/y_spider1} = \tikzfig{yrules/y_spider}  = \tikzfig{yrules/y_spider2} $$
and the Y rotation can be deomposed in the same way.
$$\tikzfig{yrules/ydef3} = \tikzfig{zx/ry} = \tikzfig{yrules/ydef4}$$
By Euler decomposition of the Y rotation, we have two ways of expressing the RY gate using RZ and RX gates. A proof of this can be found in the appendix.
$$\tikzfig{zx/ry_def1} = \tikzfig{zx/ry_def2}$$
The $\frac{\pi}{2}$ and $-\frac{\pi}{2}$ rotation gates are abbreviated to 
the ``$+$" and ``$-$" gates respectively.
$$\tikzfig{yrules/ydef1} \eq{y def} \tikzfig{zx/ry} \eq{y def} \tikzfig{yrules/ydef2}$$

By expanding the definition, the Y rotation interacts with the $\pm\frac{\pi}{2}$ Z and X 
rotation gates in the following way:
\begin{align*}
	\tikzfig{yrules/zpy1} &= \tikzfig{yrules/zpy2} &
	\tikzfig{yrules/xmy1} &= \tikzfig{yrules/xmy2} \\
	\tikzfig{yrules/zmy1} &= \tikzfig{yrules/zmy2} &
	\tikzfig{yrules/xpy1} &= \tikzfig{yrules/xpy2} \\
	\tikzfig{yrules/zpiy1} &= \tikzfig{yrules/zpiy2} &
	\tikzfig{yrules/xpiy1} &= \tikzfig{yrules/xpiy2}.
\end{align*}

Finally, we stress that the Hadamard gates are not rotations about the X,Y, or Z axes, so these two methods of decomposition for the Y rotation really are different. Furthermore, the Y spider \textbf{cannot} be decomposed into Z or X spiders using only the $+/-$ rotations in the general case. As a general rule, one can replace an equal number of input and output wires with rotation gates in place of Hadamard gates: this is done in section \ref{pauli_gadget}. In practice, it is most convenient to use the Y rotation for Y rotations and use the Hadamard gates for the more general Y spiders. 
\begin{align*}
\tikzfig{yrules/zp} &\neq \tikzfig{yrules/yx} &\tikzfig{yrules/xp} &\neq \tikzfig{yrules/yz} \\
\tikzfig{yrules/zm} &\neq \tikzfig{yrules/xy} &\tikzfig{yrules/xm} &\neq \tikzfig{yrules/zy} 
\end{align*}

\section{Diagrammatic Differentiation} \label{diadiff}

\subsection{Stone's Theorem} \label{stones}
\textit{Stone's Theorem on one parameter unitary groups} \autocite{stones}: Let $\{ U(t) \}_{t\in\mathbb{R}}$ be a strongly continuous one-parameter unitary group. Then $\{ U(t) \}_{t\in\mathbb{R}}$ is generated by a Hermitian operator $H$: 
$$ U(t) = e^{itH} $$
and conversely every Hermitian matrix generates a unitary one parameter group. The matrix exponential is given by the Taylor series:

$$ e^X = 1 + X + \frac{X^2}{2} + \frac{X^3}{6} + \ldots $$

In other words, there is a one-to-one correspondence between the Hermitian operators and one parameter unitary groups. Given a quantum mechanical system $U(t)$, Stone's theorem tells us that the time evolution of the system over time is described entirely by its infinitesimal generator $H$, known as the Hamiltonian. Thus, a quantum mechanical system can be understood by study its Hamiltonian. It is a remarkable result that the generator of the unitary group is not parameterised. For example, the RZ gates form a one parameter group $\{R_Z(\theta) \}$ and is generated by the Pauli matrix $Z$: $R_Z(\theta) = \exp\left(i\frac{\theta}{2}Z\right)$. To verify this, one can differentiate the unitary map to obtain $iU(\theta)H$. 

\begin{align*}
	\frac{\mathrm{d}}{\mathrm{d}\theta}R_Z(\theta)  &= \frac{i}{2} e^{-i\theta/2} \ket{0}\bra{0} - \frac{i}{2} e^{i\theta/2} \ket{1}\bra{1}\\
						  &= -\frac{i}{2}(e^{-i\theta/2} \ket{0}\bra{0} + e^{i\theta/2} \ket{1}\bra{1})(\ket{0}\bra{0} - \ket{1}\bra{1})\\
						  &= -\frac{i}{2} R_Z(\theta) Z \\\\
	\frac{\mathrm{d}}{\mathrm{d}\theta}R_X(\theta) &= \frac{\mathrm{d}}{\mathrm{d}\theta}HR_Z(\theta)H    \\
						 &= -\frac{i}{2} HR_Z(\theta)ZH \\
						 &= -\frac{i}{2} HR_Z(\theta)HHZH \\
						 &= -\frac{i}{2}R_X(\theta)X \\\\
	\frac{\mathrm{d}}{\mathrm{d}\theta}R_Y(\theta) &= \frac{\mathrm{d}}{\mathrm{d}\theta}R_X(-\pi/2)R_Z(\theta)R_X(\pi/2)   \\
	 &= -\frac{i}{2} R_X(-\pi/2)R_Z(\theta)ZR_X(\pi/2)   \\
	&= -\frac{i}{2} R_X(-\pi/2)R_Z(\theta)R_X(\pi/2)R_X(-\pi/2)ZR_X(\pi/2)   \\
	&= -\frac{i}{2} R_Y(\theta)Y
\end{align*}

So indeed

$$
R_Z(\theta) = \exp\left(-i\frac{\theta}{2}Z\right) \quad
R_X(\theta) = \exp\left(-i\frac{\theta}{2}X\right) \quad
R_Y(\theta) = \exp\left(-i\frac{\theta}{2}Y\right).
$$

This proves that the Hamiltonian of the rotation gates are proportional to their respective Paulis. To analyse the derivatives of parameterised quantum diagrams, we add the following 3 rules to our calculus:
$$ \frac{\mathrm{d}}{\mathrm{d}\theta}\left[ \tikzfig{zx/rz} \right] \eq{d} \tikzfig{zx/dz} $$
$$ \frac{\mathrm{d}}{\mathrm{d}\theta}\left[ \tikzfig{zx/rx} \right] \eq{d} \tikzfig{zx/dx} $$
$$ \frac{\mathrm{d}}{\mathrm{d}\theta}\left[ \tikzfig{zx/ry} \right] \eq{d} \tikzfig{zx/dy} $$

For convenience, we drop the $\mathrm{d}/\mathrm{d}\theta$ and only use the square brackets to denote the derivative of a diagram when the variable is unambiguous. To differentiate more complex diagrams, three more rules must be introduced: the chain rule, the product rule, and the linearity of derivatives. The linearity rule uses the observation that 

$$\frac{\mathrm{d} MU(\theta)N}{\mathrm{d} \theta} = M\frac{\mathrm{d} U(\theta)}{\mathrm{d} \theta}N$$
\begin{align*}
       &\tikzfig{deriv/linear1} \\
    \eq{lin}\quad &\tikzfig{deriv/linear2}.
\end{align*}

The product rule is used when a circuit has two gates that depend on the same parameter.
$$\frac{\mathrm{d} U_1(\theta)U_2(\theta)}{\mathrm{d} \theta} = \frac{\mathrm{d} U_1(\theta)}{\mathrm{d} \theta}U_2(\theta) + U_1(\theta)\frac{\mathrm{d} U_2(\theta)}{\mathrm{d} \theta}$$
       $$\qquad\tikzfig{deriv/product1}$$
       $$\eq{pr}\quad \tikzfig{deriv/product2} + \tikzfig{deriv/product3}$$

Finally, we also have the chain rule, which is used when the parameter of the gate is a function of the 
variable we are differentiating with respect to.
$$\frac{\mathrm{d} U(f(\theta))}{\mathrm{d} \theta} = \frac{\mathrm{d} U(f(\theta))}{\mathrm{d} f(\theta)}\frac{\mathrm{d} f(\theta)}{\mathrm{d} \theta}$$
$$\tikzfig{deriv/chain1} = \tikzfig{deriv/chain2}$$

As an example, the Z and X spiders can be differentiated as follows:
$$
\tikzfig{deriv/x11} \eq{f, lin} 
\tikzfig{deriv/x12} \eq{d}
\tikzfig{deriv/x13} \eq{f}
\tikzfig{deriv/x14}
$$
$$
\tikzfig{deriv/zn1} \eq{f, lin} 
\tikzfig{deriv/zn2} \eq{d}
\tikzfig{deriv/zn4} \eq{f}
\tikzfig{deriv/zn5}
$$

\subsection{Computing the Hamiltonian Diagrammatically}

These rules for differentiation can be used to find the Hamiltonian suggested by Stone's theorem. Since any element in a one parameter unitary group can written as $e^{itV}$, its derivative should be $iVe^{itV} = ie^{itV}V$. If a ZX circuit does indeed form a one parameter unitary group, then its derivative should be the original circuit composed with the Hamiltonian. For example, the tensor product of the Z rotation and the X rotation groups is also a one parameter unitary group, so we can find its Hamiltonian by differentiating with respect to $\theta$.
\begin{align*}
\tikzfig{stones/g1} &\eq{pr, d} \tikzfig{stones/g2} + \tikzfig{stones/g3} \\
                    &= \left(\tikzfig{stones/g4} + \tikzfig{stones/g5} \right) \tikzfig{stones/g6}
\end{align*}
We conclude that:
$$ \tikzfig{stones/g6} =  \exp\left(-\frac{i}{2}\theta\left(\tikzfig{stones/g4} + \tikzfig{stones/g5} \right)\right) $$
where the $\frac{i}{2}$ factor is reintroduced after it was discarded from the 
graphical calculus in the $\mathsf{(d)}$ rule. Although we generally work in a scalar-free ZX calculus, when working with matrix exponentials the scalars must be taken into account.
\pagebreak
\section{Variational Quantum Circuits} \label{qnndiff}
The idea of applying quantum computers to compute neural-network-like structures was first introduced in 1994 by Lewenstein \autocite{qnn}, with the quantum perceptron. \textit{Variational Quantum Circuits} (VQC), also known as \textit{Parameterised Quantum Circuits} (PQC), are conceptually similar to a traditional neural network, with tunable parameters arranged in layers. However, an $n$-qubit quantum neural network represents a  $2^n\times 2^n$ unitary mapping, so the number of input qubits is equal to the number of output qubits, and there are no non-linear activation gates. Instead, small local gates are used as building blocks of the overall unitary. Typically, CNOT, CZ and rotation gates are used, as they can be simulated by a quantum computer relatively efficiently. The phases of the rotation gates are determined by the parameters of the circuit. 

To match the convention used by Coecke and Kissinger in \textit{Picturing Quantum Processes} \autocite{coecke_kissinger}, this subsection displays its diagrams vertically from bottom to top.
\ctikzfig{qml/pqc}
$$ U(\bm \theta) = \prod_{i=1}^n U_i(\theta_i) $$

In the diagram above, we give a compact representation of the circuit where each horizontal strip of blocks consists only of one-parameter gates $U_k(\theta_k)$, and parameter-free gates $U$. We preclude the possibility that the circuit uses a parameter more than once.
The big matrix product operator is non-commutative, applying the unitaries from right-to-left in the equations and bottom-to-top in the diagram.

Typically the inputs are encoded as pure states or parameters of rotation gates. 
Both methods correspond to a layer of X spiders followed by a layer of RX gates. The input parameters are not tuned during training and can be absorbed into $U_1(\theta_1)$.
\[\tikzfig{qml/prepare1} \quad = \quad \tikzfig{qml/prepare2} \]
\[\tikzfig{qml/prepare3} \quad = \quad \tikzfig{qml/prepare4} \quad = \quad \tikzfig{qml/prepare5} \]

The output of the PQC is an entangled quantum state, which collapses to a particular value when measured. Measurable physical quantities such as momentum, spin and angular momentum can be associated with a Hermitian operator $H$, known as an observable. The expected value of the physical quantity's measurement $H$ over the mixed state $U(\theta)\ket{0}$ can be calculated as

\ctikzfig{qml/double_form_straight}
$$
E(\boldsymbol{\theta})=\left\langle 0\left|U(\boldsymbol{\theta})^{\dagger} H U(\boldsymbol{\theta})\right| 0\right\rangle
$$

Since the parameter $\theta_i$ appears twice in the expectation, the partial derivative of the expected value $\frac{\partial E}{\partial \theta_i}$ is calculated using the product rule:

\ctikzfig{qml/top}
$$\scalebox{1.5}{$i$}\left(\tikzfig{qml/normal_bracket}\right)$$
\ctikzfig{qml/bottom}

The two terms can be grouped together using the commutator $[\cdot, \cdot]$ which is defined as 
$ [A,B] = AB - BA $.
\ctikzfig{qml/top}
$$\scalebox{1.5}{$i$}\left[\tikzfig{qml/lie_bracket}\right]$$
\ctikzfig{qml/bottom}
$$
\frac{\partial E(\boldsymbol{\theta})}{\partial \theta_{k}}=i\left\langle 0\left|U_{-}^{\dagger}\left[V_{k}\otimes I, U_{+}^{\dagger} H U_{+}\right] U_{-}\right| 0\right\rangle
$$
\[ U_{-} = \prod_{i=1}^{k}U_i(\theta_i) \qquad U_{+} = \prod_{i=k+1}^nU_i(\theta_i)\] 
These gradients can then be used to perform backpropagation. 

\section{Quantum Machine Learning}

We can now to apply quantum machine learning to solve 
real-world problems. As described in the previous section, the learning procedure using a parameterised quantum circuit can be summarised as follows:

\begin{enumerate}
	\item The input $\bm x$ of the data $(\bm x, y)$ is encoded into the input state of the network either as a product state $\ket{\bm x}$ or as $\ket{0}$ followed by a set of rotations $\Motimes_{i=1}^nR(x_i)$.
	\item The parameterised unitary map $U(\bm \theta)$ is applied to the input state, producing a new, entangled state.
	\item The entangled state is observed using a measurement operator. The result of the measurement is either used as the predicted label $\hat{y}$, or later used to produce the predicted label using classical computation.
\end{enumerate}

These three steps describe how a parameterised quantum circuit can be used to make predictions. To improve upon these predictions, gradient-based methods are deployed to minimise a loss function, just like in classical machine learning.

\begin{enumerate}
	\setcounter{enumi}{3}
\item The predicted label $\hat{y}$ and the true label $y$ are used to compute a loss function via classical computation, e.g.\  the squared loss function $ \mathscr{L}(\hat{y}) = (y - \hat{y})^2$. 
	\item Backpropagation is used to obtain $\frac{\partial \mathscr{L}}{\partial \theta_i}$.
	\item $\frac{\partial \mathscr{L}}{\partial \theta_i}$ is used to perform gradient descent:
		$$ \bm\theta_{k+1} = \bm\theta_k - \gamma \left.\frac{\partial \mathscr{L}}{\partial \bm\theta}\right|_{\bm\theta=\bm\theta_k}.$$
\item Repeat this procedure until $\bm \theta$ converges.
\end{enumerate}

As seen above, the only difference between quantum machine learning and classical machine learning is the manner in which the circuit and its gradients are computed; both of these computations can be managed by the PennyLane library.

\section{PennyLane} \label{pennylane}

PennyLane \autocite{pennylane} is a Python library for quantum machine learning and quantum-classical operations. Similar to PyTorch, TensorFlow and other machine learning libraries, PennyLane performs automatic differentiation of circuits by building a computation graph and computing the overall gradient using the chain rule. Furthermore, it provides plugins to either simulate a quantum device, or directly interface with actual quantum devices.

Due to the inherent non-determinism of the PQC's output, a key difference between PennyLane and classical machine learning frameworks is that the expected value of the output is used instead; this expected value is estimated by averaging over multiple measurements. It is important to point out that the expected value of the estimator may not ever be the circuit output.
\subsection{Gradient Recipes}

The symbolic differentiation of a circuit with respect to one parameter has been shown in \ref{qnndiff}. 
However, this expression cannot be directly computed efficiently on a classical computer for the same reason 
the original circuit cannot be classically simulated (unless $\mathsf{BQP}=\mathsf{BPP}$). One could try to estimate the 
derivative by performing numerical differentiation using quantum devices

$$\frac{\partial f(x)}{\partial \theta_i} \approx \frac{f(x+h) - f(x-h)}{2h}$$

but choosing the most numerically satisfactory $h$ is difficult and the non-determinism of the circuit further complicates the estimation. 
Fortunately, there is a better way. The exact, analytical derivative of a quantum gate $f(\bm{\mu})$ is given by 

$$
\frac{\partial f(x)}{\partial\theta_i}=c(f(x+s)-f(x-s))
$$

where $c, s \in \mathbb{R}$ differ for each gate and depend on the eigenvalues of the quantum gate's Hamiltonian \autocite{recipes}. 
To emphasise, this is \textbf{not} a numerical approximation, but an algebraic trick to rewrite the derivative analytically. 
Therefore the accuracy of the gradient estimate solely depends on our estimates of $f(\mu+s)$ and $f(\mu-s)$, which 
improve from repeated sampling the circuit by the law of large numbers. Alternatively, one could perform ``doubly stochastic gradient" descent by averaging the gradient over different data samples \autocite{sweke2019stochastic}.


\section{Barren Plateaus in VQCs} \label{barren}
It is a mystery why gradient descent empirically performs well on deep neural networks from a learning theory perspective. 
While it may be visually comforting to think about descending towards the bottom of a 3 dimensional convex bowl, deep neural 
networks generally have a non-convex landscape so travelling in the direction of steepest descent may take one to a saddle point 
or a local minima instead. Chroromanska et al.\ \autocite{loss} conjectured that the unexpected success of applying gradient descent to deep learning is due to 
the geometry of the loss \textit{landscape}:

\begin{displayquote}
    ``... We show that for large-sized decoupled networks the lowest critical values of the random loss function form
a layered structure and they are located in a well-defined band lower-bounded by the
global minimum. The number of local minima outside that band diminishes exponentially with the size of the network. 
... We conjecture that both simulated annealing and stochastic gradient descent converge to the band of low critical points, and that all
critical points found there are local minima of high quality measured by the test error."
\end{displayquote}

In short, there are very few bad local minimas. This, combined with an argument with the Hessian matrix that most critical points 
are saddles rather than maximas and minimas could be the explanation to why gradient descent seems to yield good results even 
though the global minimum is not always reached.

The application of variational quantum circuits to machine learning via gradient descent gives rise to questions of a 
similar nature: 
\begin{center}
	\textit{What do the landscapes of VQCs look like? \\Which architectures of VQCs, if any, provide `nice' 
landscapes for learning?} 
\end{center}
Although random circuits are often proposed as initial guesses
for exploring the space of quantum states, McClean et al.\ \autocite{barren} showed in 2018 that wiring the circuits randomly 
does not yield good landscapes: 

\begin{displayquote}
	``Specifically, we show that for a wide class of reasonable
parameterized quantum circuits, the probability that the gradient along any reasonable direction is
non-zero to some fixed precision is exponentially small as a function of the number of qubits."
\end{displayquote}

where the ``wide class of reasonable parameterized quantum circuits" is a technical formalisation of 
``almost uniformly sampled unitaries". The geometric interpretation of this result is that the landscapes that result 
from randomly wired quantum circuits are wide and flat with gradients numerically close to zero, making gradient descent 
computationally infeasible or even numerically impossible. This means it is necessary to find a better way to design the quantum circuits 
used in machine learning. Besides analysing specific structures in (Section \ref{init_strat}), we also discuss a general initialisation strategy 
proposed by Grant et al.\ \autocite{init} and provide a diagrammatic interpretation of it.

\section{Types of Ans\"{a}tze} \label{PQC architectures}
Since parameterised quantum circuits cannot be wired randomly, the quantum machine learning community has proposed other structures that can be used instead, known as \textit{ans\"{a}tz}. The term ``ans\"{a}tz" comes from physics, where variational methods are used to find approximate solutions to the lowest energy ground state of a quantum system. In both situations, the quality of the approximation depends heavily on the choice of ans\"{a}tze. Generally, circuit ans\"{a}tze fall under three basic classes: \textit{layered gate ans\"{a}tze}, \textit{alternating operator ans\"{a}tze} \autocite{hadfield2019quantum}, and \textit{tensor network ans\"{a}tze} \autocite{huggins2019towards, expressive}
\begin{center}
\begin{tabular}{cc}
	\tikzfig{ansatz/aoa} & \tikzfig{ansatz/tna} \\
	Alternating Operator Ansatz & Tensor Network Ansatz
\end{tabular}
\end{center}
Alternating operator ans\"{a}tz can be thought of as a special case of layered gate ans\"{a}tz. 
Furthermore, layered gate ans\"{a}tz have theoretically been shown to be more expressive than tensor network ans\"{a}tz. \autocite{expressive} Therefore we will direct our attention to analysing circuit ans\"{a}tz with repeating blocks.



\section{Converting Ans\"{a}tze to ZX Calculus} \label{conv}
Now we can convert ans\"{a}tz from traditional circuit notation to ZX-calculus:

\vspace{0.5cm}
\begin{center}
\includegraphics[scale=1]{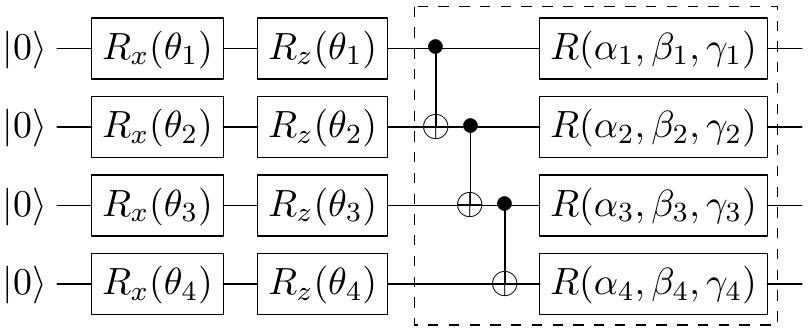} \\
From \textit{S YC Chen et al.\ } \autocite{reinforcement}, used for Deep Reinforcement Learning. \\
\vspace{3mm}

\tikzfig{ansatz/stairs}

Ans\"{a}tz converted to ZX Calculus.

\tikzfig{ansatz/stairsy}

Ans\"{a}tz converted to ZX Calculus + Y Spiders.
\end{center}
\begin{center}
\includegraphics[scale=1]{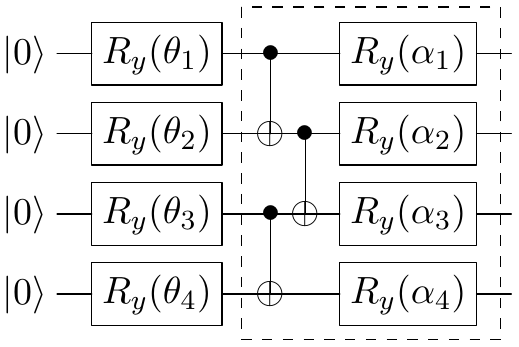} \\
From \textit{A Mari et al.\ } \autocite{transfer}, used for Hybrid Classical-Quantum Transfer Learning. \\
\vspace{3mm}

\tikzfig{ansatz/bricks} 

Ans\"{a}tz converted to ZX Calculus.

\tikzfig{ansatz/bricksy}

Ans\"{a}tz converted to ZX Calculus + Y Spiders.
\end{center}

We refer to the former CNOT layout as ``staircase'' and the latter one as ``brick wall''. Chapter 4 contains more ans\"{a}tze conversions and simplifications.

\chapter{Developing Diagrammatic QML}
This chapter introduces the phase gadget and the Pauli gadget, and explains why they are the natural way to reason with parameterised quantum circuits. 
We first express phase polynomials in terms of phase gadgets in section \ref{power}, then we characterise the quality of phase polynomial ans\"{a}tze by counting the number of phase gadgets they generate. We also show how to represent a wide class of circuits using only Pauli gadgets in section \ref{convert_ansatz}. 

\section{Phase Gadgets}\label{phase_gadget}
The \textit{phase gadget} \autocite{phasegadget} is diagrammatically defined as 
$$ \Phi_n(\theta) = \tikzfig{phase_gadget/pg} $$
and has the following properties. \\

\noindent\textbf{Fusion rule}:
	$$\tikzfig{phase_gadget/pg_2_1} = \tikzfig{phase_gadget/pg_2_2}$$
\begin{proof}
	The left-hand side can be rearranged to the right-hand side with the following intermediate steps:
    $$\tikzfig{phase_gadget/pg_2_5} \eq{f} \tikzfig{phase_gadget/pg_2_3} \eq{b} \tikzfig{phase_gadget/pg_2_4}$$
\end{proof}	
\noindent\textbf{Commutation rule}:
	$$\tikzfig{phase_gadget/pg_3_1} = \tikzfig{phase_gadget/pg_3_2}$$
\begin{proof}The gadget legs commute through spider fusion and unfusion.\end{proof}
\noindent\textbf{Decomposition rule (multi-legs)}:
	$$\tikzfig{phase_gadget/pg_4_1} = \tikzfig{phase_gadget/pg_4_2}$$
\begin{proof}The 2 legged X spider is the identity map, so the two Z spiders fuse. \end{proof}
\noindent\textbf{Decomposition rule (single-leg)}:
	 $$\tikzfig{phase_gadget/pg_1_2} = \tikzfig{phase_gadget/pg_1_1}$$
\begin{proof}
	The left-hand side can be rearranged to the right-hand side with the following intermediate steps:
    $$\tikzfig{phase_gadget/pg_1_3} = \tikzfig{phase_gadget/pg_1_4} \eq{b} \tikzfig{phase_gadget/pg_1_5}$$
 \end{proof}

\noindent\textbf{Identity rule}:
\[\tikzfig{phase_gadget/pg_5_1} \eq{c} 
\tikzfig{phase_gadget/pg_5_3} \eq{f, id1} 
\tikzfig{phase_gadget/pg_5_4}\]
The identity and addition rules make the phase gadget a one parameter 
unitary group, so we can find its Hamiltonian by differentiating with respect to $\theta$.

$$\tikzfig{phase_gadget/pg_diff1}
\eq{d}\tikzfig{phase_gadget/pg_diff2}
\eq{$\pi$}\tikzfig{phase_gadget/pg_diff3}
\eq{f}\tikzfig{phase_gadget/pg_diff4}$$

By Stone's theorem, $\Phi_n(\theta) = e^{i\frac{\theta}{2} P}$, where the Hamiltonian $P\in\{I,Z\}^{\otimes n}$ is a tensor product of $I$ and $Z$ that depends on the arrangement of the legs. For example, 

$$\left\llbracket\tikzfig{phase_gadget/pg_def}\right\rrbracket = \exp\left(i \frac{\theta}{2} (Z\otimes I\otimes Z)\right)$$

The decomposition rules allow a phase gadget to be decomposed into CNOT and Z rotation gates. The effect of the phase gadget on the circuit is invariant with respect to the wires the phase gadget acts on, and the decomposition of a phase gadget is not unique; we can decompose in a manner that increases the circuit depth by $\lceil\log_2 n\rceil$ instead of $n$.
$$ \tikzfig{phase_gadget/pg} = \tikzfig{phase_gadget/pg_decompose} $$
$$ \tikzfig{phase_gadget/pg_ladder} = \tikzfig{phase_gadget/pg_tree} $$
The decomposition rules of phase gadgets combined with complementarity gives us the following commutation rule
for Z and X phase gadgets
$$ \tikzfig{phase_gadget/pg_cnot1} \eq{com} \tikzfig{phase_gadget/pg_cnot2} $$
$$ \tikzfig{phase_gadget/pg_cnot3} \eq{com} \tikzfig{phase_gadget/pg_cnot4} $$
This commutation rule will be used extensively throughout the project to obtain normal forms and parametric redundancy results.
\section{Phase Polynomials}
Phase polynomials are a class of quantum circuits composed of CNOT and Z rotation gates. 
They were first introduced by Amy et al.\ \autocite{Amy_2018} in the \textit{sum-over-paths} 
form:

$$f(\boldsymbol{x})=\sum_{\boldsymbol{y} \in \mathbb{P}_{2}^{n}} \widehat{f}(\boldsymbol{y}) \cdot\left(x_{1} y_{1} \oplus x_{2} y_{2} \oplus \cdots \oplus x_{n} y_{n}\right)$$

$$U_{C}=\sum_{x \in \mathbb{F}_{2}^{n}} e^{2 \pi i f(\boldsymbol{x})}|A \boldsymbol{x}\rangle\langle\boldsymbol{x}|$$

where $A$ can be viewed as a change of basis in $GF(2)$ and $\hat{f}(\bm y)$ can be viewed as the Fourier coefficients.  Phase polynomials are of interest to quantum computing researchers because they are central to the construction of Instantaneous Quantum Polynomial-time (IQP) circuits, which are believed to be hard to classically simulate; Bremner et al.\ have shown that IQP circuits can sample from a probability distribution that is believed to be classically intractable to sample, 
provided the polynomial hierarchy does not collapse \autocite{IQP}.

\ctikzfig{iqp}

\subsection{Characterising the Power of Phase Polynomials}\label{power}

To understand the power of phase polynomials, we will study the unitaries represented by phase gadgets, which can be used to describe the dimensionality of a parameterised phase polynomial. Diagrammatically, a phase gadget takes its inputs in the Z basis, copies it, XOR's the bits together, and multiplies the state by $\exp(-i\theta/2)$ or $\exp(i\theta/2)$ depending on the parity. One can think of the phase gadget as a many-qubit generalisation of the Z rotation gate.

$$ \left\llbracket\tikzfig{pp/interp}\right\rrbracket = 
 \text{diag}\left\{\begin{pmatrix}
1\\
e^{i\theta} \\
e^{i\theta} \\
1 \\
e^{i\theta} \\
1 \\ 1 \\
e^{i\theta} \\
\end{pmatrix}
\cdot e^{-i\frac{\theta}{2}}
\right\}
$$

Each phase gadget corresponds to a diagonal matrix in the Z basis, where the $i$th diagonal entry $e^{-i\theta/2}$ or $e^{i\theta/2}$ depending on the parity of its legs.  
Phase gadgets alone cannot change the distribution of the observed state, but can be powerful when combined with other components, such as a change of basis.
$$
\left\llbracket\tikzfig{gs/diagonal_complete} \right\rrbracket =
\text{diag}\left\{\begin{pmatrix}
1\\
e^{i( a+b+d+e)}\\
e^{i( a+b+c+f)}\\
e^{i( c+d+e+f)}\\
e^{i( a+c+d+g)}\\
e^{i( b+c+e+g)}\\	
e^{i( b+d+f+g)}\\
e^{i( a+e+f+g)}
\end{pmatrix} \cdot s
\right\}
$$
$$ s = e^{-i(a+b+c+d+e+f+g)/2} $$
In general, a complete set of phase gadgets forms a basis for diagonal unitaries, up to a global phase. This means that the diagonal phase polynomials used in IQP circuits can be constructed using phase gadgets alone. In the more general case, phase polynomials combine the phase gadgets along with a layer of CNOTs to create a non-diagonal permutation matrix, where each non-zero entry is a phase rotation of the form $e^{i\theta}$.
\begin{align*}
	\text{Just Phase Gadgets:\quad} &\displaystyle\sum_{\bm{x} \in \mathbb{F}_{2}^{n}} e^{2 \pi i f(\boldsymbol{x})}|\boldsymbol{x}\rangle\langle\boldsymbol{x}| \\
	\text{Just CNOTs:\quad} &\displaystyle\sum_{\bm{x} \in \mathbb{F}_{2}^{n}} |A\boldsymbol{x}\rangle\langle\boldsymbol{x}| \\
	\text{Phase Polynomials:\quad} &\displaystyle\sum_{\bm{x} \in \mathbb{F}_{2}^{n}} e^{2 \pi i f(\boldsymbol{x})}|A\boldsymbol{x}\rangle\langle\boldsymbol{x}|
\end{align*}
When Amy et al.\ \autocite{Amy_2018} introduced the sum-over-paths form of a phase polynomial, they commented that there is a many-to-one correspondence from CNOT+RZ circuits to the sum-over-paths form of a phase polynomial, but the sum-over-paths form has a one-to-one correspondence with the underlying unitary. Diagrammatically, we can obtain a normal form for phase polynomials by ``dragging" all of the phase rotations to the left to obtain a layer of (commuting) phase gadgets followed by a layer of CNOTs. This normal form corresponds exactly to the sum-over-paths form.

By considering the diagonal unitary matrix as an element in a $2^n$ dimensional vector space, the circuit parameterisation with $k$ unique phase gadgets spans a $k$-dimensional subspace. Put simply, the more unique phase gadgets in the phase polynomial, the more expressive the phase polynomial is. This will be a guiding principle when characterising the strength of other variational quantum circuits.

\section{Learning Phase Polynomials}

Although phase polynomials alone are not as expressive as general QML ans\"{a}tze, their desirable commutative properties make them a good starting point to study normal forms and parametric redundancies of QML ans\"{a}tze. 
One aspect of ans\"{a}tz design worth studying is the effect of CNOT arrangement on the circuit. 
Many proposed ans\"{a}tze use only CNOTs and phase rotations, but tend to only have CNOT gates of one orientation. This may be because a misplaced vertically-inverted CNOT can lead to swap gates, which may cause parameters to fuse. 
Therefore it is necessary to introduce the following definitions:
\begin{definition}
	\textit{Parameterised phase polynomials} (PPPs) are circuits constructed using only CNOT and parameterised RZ gates.
\end{definition}

\begin{definition}
    \textit{Monotonic parameterised phase polynomials} (MPPPs) are circuits constructed using only CNOTs and parameterised RZ gates, with the CNOTs arranged in a monotonic fashion. More formally, given a totally ordered set of wires, a collection of CNOTs $\{ CNOT(x_i, y_i) \}_i$ is monotonic if all of the CNOTs are applied in such a way that either 
    $\forall i: \: x_i < y_i$ or $\forall i: \: x_i > y_i$.
\end{definition}

Studying MPPPs could tell us whether avoiding vertically inverted CNOTs entirely is a good idea when designing ans\"{a}tze. Ans\"{a}tze that generate MPPPs can be obtained by removing the RX gates and inverted CNOT gates from ans\"{a}tze that only consist of RZ, RX and CNOT gates. For example, we could remove the RX gates from the ans\"{a}tze in section \ref{convert_ansatz}.

\begin{center}
\tikzfig{pp/stairs}
\hspace{1cm}
\tikzfig{pp/bricks}
\end{center}

Like other phase polynomials, these MPPP ans\"{a}tze yield a normal form by propagating all RZ gates to the right. 
If the repeating unit is repeated five times, the first layer of RZ rotations eventually transforms into one-legged gadgets as it passes through the repeating unit, which by the second decomposition rule become RZ rotations once again; these RZ rotations commute with the gadgets from the second to fourth layers and fuse with the RZs in the fifth layer, making those parameters redundant. This ``periodicity" phenomenon is not unique to the ``brick wall" layout, and actually generalises to all MPPPs. (See theorem \ref{period})

$$\tikzfig{pp/brick_drag1}$$
$$\tikzfig{pp/brick_drag2}$$
$$\tikzfig{pp/brick_drag3}$$
$$\tikzfig{pp/brick_drag4}$$

If we write the positions of the phase gadget's legs as a vector in $GF(2)^n$, we see that the propagation of a circuit of CNOTs through a phase gadget is a linear map in $GL(n, 2)$. By utilising the linearity of the transformation, the entire transformation can be calculated by commuting Z rotations on each qubit, which are represented by the basis vectors in $GF(2)^n$. As an example, the staircase and brick wall layers from section \ref{conv} have mappings
$\displaystyle ( x_{1} ,x_{2} ,x_{3} ,x_{4}) \mapsto ( x_{1} \oplus x_{2} ,x_{2} \oplus x_{3} ,x_{3} \oplus x_{4} ,x_{4})$ and $\displaystyle ( x_{1} ,x_{2} ,x_{3} ,x_{4}) \mapsto ( x_{1} \oplus x_{2} ,x_{2} \oplus x_{3} \oplus x_{4} ,x_{3} \oplus x_{4} ,x_{4})$ respectively and correspond to matrices

$$
\begin{aligned}
\begin{pmatrix}
1 & 1 & 0 & 0\\
 & 1 & 1 & 0\\
 &  & 1 & 1\\
 &  &  & 1
\end{pmatrix} & \ \ \ \ \ \ \text{and} \ \ \ \ \ \ \begin{pmatrix}
1 & 1 & 0 & 0\\
 & 1 & 1 & 1\\
 &  & 1 & 1\\
 &  &  & 1
\end{pmatrix}
\end{aligned}
$$

respectively. We emphasise that these matrices are not the same matrices as the change of basis $A$, but in fact the \emph{inverse transpose} of $A$, $(A^{-1})^T$. The proof of this is given in 
theorem \ref{inv_trans}.


As seen in section \ref{PQC architectures}, layered gate ans\"{a}tze repeat their layers as that is believed to be a good technique to generate good ans\"{a}tze. But is this the case for MPPP circuits? During the conversion to the normal form, the Z rotations on the $i$th layer pass through $L-i$ layers of CNOTs, where $L$ is the number of repeated layers. This process is captured by repeatedly applying the matrix $B\in GL(n, 2)$ representing the CNOT block to the vector in $GF(2)^n$ representing the legs of a phase gadget. Overall the action on the phase gadget after passing through $l$ repeating layers is given by the matrix $B^l$. 

Seeing as how the 4 qubit brick ans\"{a}tz transforms the Z rotations back into Z rotations after 4 layers of CNOTs, we may wish to find the \textit{group order} of such matrices in general.  Any MPPP circuit has a lower triangular change of basis $A$, so the action of its CNOTs on a Z phase gadget is given by upper triangular matrix $B=(A^{-1})^T$. If the group order of such matrices is low, then the phase polynomial ans\"{a}tz produces fewer unique phase gadgets and so the overall phase polynomial is not very expressive.
\begin{theorem}\label{period}
	The maximum order of an $n\times n$ triangular matrix $B$ in $GL(n,2)$ is $m=2^{\lceil \log_2n \rceil}$. That is: $B^m = I$. 
\end{theorem}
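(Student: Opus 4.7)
The plan is to exploit the fact that we are working in characteristic $2$, where the Frobenius ``freshman's dream'' $(X+Y)^2 = X^2 + Y^2$ holds for commuting matrices. First I would note that any invertible triangular matrix over $GF(2)$ must have all diagonal entries equal to $1$ (the only nonzero element), so we may write $B = I + N$ where $N$ is strictly triangular. Being strictly triangular, $N$ is nilpotent of index at most $n$, i.e.\ $N^n = 0$.

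Since $I$ and $N$ commute, squaring cancels the cross terms modulo $2$, giving $(I+N)^2 = I + N^2$. Iterating $k$ times yields $B^{2^k} = I + N^{2^k}$. Choosing $k = \lceil \log_2 n \rceil$ makes $2^k \geq n$, so $N^{2^k} = 0$ and hence $B^{m} = I$ for $m = 2^{\lceil \log_2 n \rceil}$. This establishes the upper bound uniformly over every triangular $B$.

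For tightness I would exhibit an explicit $B$ achieving this order. The natural choice is $B = I + J$, where $J$ is the $n\times n$ Jordan block with $1$'s on the superdiagonal; a direct check gives $J^{n-1} \neq 0$ but $J^n = 0$, so $J^{2^k} \neq 0$ for any $k < \lceil \log_2 n \rceil$, which rules out any strictly smaller power of two from equalling the identity.

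The main obstacle is ruling out non-power-of-two orders: \emph{a priori} some $m' < m$ that is not itself a power of $2$ might still satisfy $B^{m'} = I$. To close this I would show that the order of any unipotent matrix over a field of characteristic $2$ must be a power of $2$. Concretely, writing $m' = 2^a q$ with $q$ odd, the identity $x^{m'} - 1 = (x^q - 1)^{2^a}$ holds in $GF(2)[x]$; and since the formal derivative of $x^q - 1$ is $x^{q-1}$ (as $q$ is odd), $x=1$ is a simple root of $x^q - 1$, so the multiplicity of $(x-1)$ in $x^{m'} - 1$ is exactly $2^a$. Because the minimal polynomial of $B = I + J$ is $(x-1)^n$, the condition $B^{m'} = I$ forces $2^a \geq n$, hence $m' \geq 2^{\lceil \log_2 n \rceil}$, completing the lower bound.
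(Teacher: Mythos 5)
Your proof is correct, and its core upper-bound argument is the same as the paper's: write $B = I + N$ with $N$ strictly triangular (hence nilpotent with $N^n = 0$), apply the characteristic-$2$ freshman's dream to get $B^{2^k} = I + N^{2^k}$, and choose $2^k \geq n$. Where you diverge is in the surrounding bookkeeping. The paper handles the ``which orders are possible'' question by observing that the invertible triangular matrices over $GF(2)$ form a $2$-group of order $2^{n(n-1)/2}$, so by Lagrange every element order is a power of $2$; you instead rule out non-power-of-two orders via the factorisation $x^{m'}-1 = (x^q-1)^{2^a}$ and a formal-derivative computation of the multiplicity of $(x-1)$, compared against the minimal polynomial $(x-1)^n$ of $I+J$. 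Both are valid, though your final paragraph is actually redundant: once you have shown $B^m = I$ with $m = 2^{\lceil \log_2 n\rceil}$, the order of every triangular $B$ divides $m$ and is therefore automatically a power of two, so exhibiting $J^{2^k} \neq 0$ for $k < \lceil\log_2 n\rceil$ already pins down the order of $I+J$ exactly. On the other hand, you prove something the paper does not: the paper's proof establishes only that $B^m = I$ for all triangular $B$ (an upper bound on the order), whereas the theorem claims $m$ is the \emph{maximum} order; your Jordan-block example $B = I + J$ with $J^{n-1} \neq 0$ supplies the missing attainment argument, so your write-up is the more complete of the two.
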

\begin{proof}
	The triangular matrices form a group of order $2^{n(n-1)/2}$ under multiplication, so the order of $B$ must be a power of 2 by Lagrange's theorem. Let $B = I + L$. By the freshman's dream lemma: $B^m = I + L^m$. Triangular matrices with a zero diagonal are nilpotent with $L^n = 0$, as the size of the triangle of non-zero entries decreases per left multiplication of $L$. Since $m \geq n$, $L^m = 0$ so $B^m = I$. 
\end{proof}

This theorem immediately generalises to $GF(p)$.

\begin{corollary}
	 An $n$-qubit MPPP circuit generated from layered gate ans\"{a}tz has no more than $2n^2$ non-redundant parameters.
\end{corollary}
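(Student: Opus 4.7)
The plan is to leverage Theorem \ref{period} to bound the number of distinct phase gadgets appearing in the normal form of an MPPP circuit produced by a layered ansatz, and then apply the fusion rule to collapse all gadgets sharing the same leg pattern into a single parameter.

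First, I will set up the normal form explicitly. An $L$-fold repetition of the ansatz alternates RZ layers with identical CNOT blocks whose action on phase gadget leg vectors is the fixed upper triangular matrix $B \in GL(n,2)$ introduced just before Theorem \ref{period}. Propagating each RZ gate through all subsequent CNOT blocks by the commutation rule from Section \ref{phase_gadget} turns an $R_Z(\theta)$ originally sitting $l$ CNOT blocks from the boundary on qubit $j$ into a Z phase gadget with leg vector $B^{l}e_j \in GF(2)^n$, where $e_j$ is the $j$th standard basis vector. The CNOT blocks themselves then combine into a single residual block $B^L$ at the boundary.

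Next, I will apply the phase gadget fusion rule: two Z phase gadgets with identical leg vectors fuse into one gadget whose angle is the sum of the two original angles, so any two parameters associated to the same leg vector merge into a single non-redundant parameter. Consequently, the total number of non-redundant parameters is at most the cardinality of the set $\{\, B^{l}e_j : l \ge 0,\ 1 \le j \le n \,\}$. By Theorem \ref{period}, $B^m = I$ where $m = 2^{\lceil \log_2 n \rceil}$, so the orbit of each $e_j$ under $B$ has at most $m$ elements, and the union of the $n$ orbits has at most $nm$ elements.

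Finally, I will bound $nm \le 2n^2$. Since $2^{\lceil \log_2 n \rceil} \le 2n$ for every positive integer $n$ (the least power of two that is at least $n$ is at most $2n$), we have $m \le 2n$ and therefore $nm \le 2n^2$. The main point requiring care, and the only mild obstacle, is ensuring the parameter bookkeeping faithfully covers arbitrarily deep repetitions and that every RZ-derived phase gadget really is given by a basis-vector orbit under $B$; once the normal form is set up this way, the bound drops out of Theorem \ref{period} and a simple estimate on powers of two.
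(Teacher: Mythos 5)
Your argument is correct and is exactly the reasoning the paper intends: the corollary is stated without an explicit proof, but the surrounding discussion (the $B^l$ orbit description of how RZ layers transform, and the accompanying figure) sketches precisely your orbit-counting argument, with Theorem~\ref{period} giving $m=2^{\lceil\log_2 n\rceil}\le 2n$ and hence at most $n\cdot m\le 2n^2$ distinct leg vectors surviving after fusion. No gaps beyond the implicit assumption (shared with the paper) that each repeating layer contributes one rotation per qubit.
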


\begin{figure}
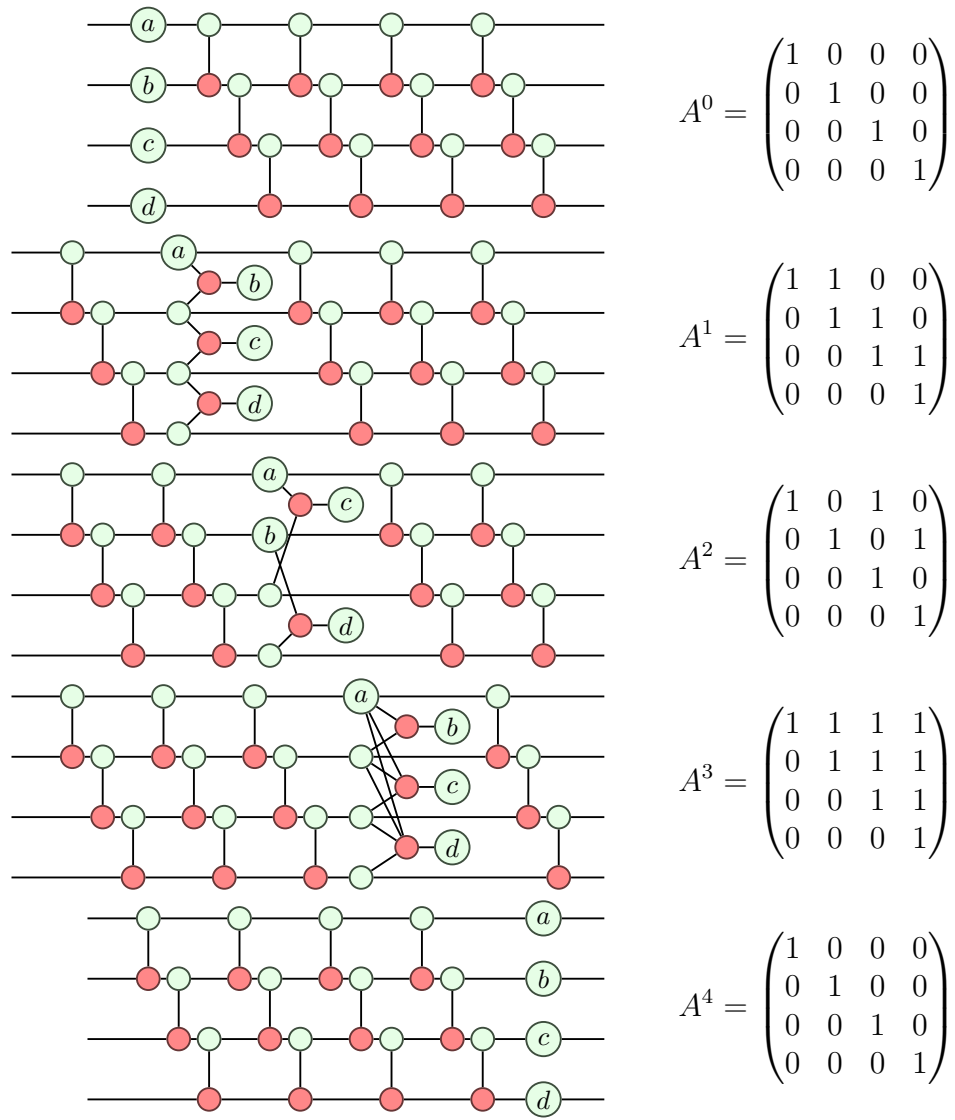

\begin{align*}
\tikzfig{pp/big/stair1} &
\qquad A^0 =
\begin{pmatrix}
1 & 0 & 0 & 0\\
0 & 1 & 0 & 0\\
0 & 0 & 1 & 0\\
0 & 0 & 0 & 1\\
\end{pmatrix}\\
\tikzfig{pp/big/stair2} &
\qquad A^1 =
\begin{pmatrix}
1 & 1 & 0 & 0\\
0 & 1 & 1 & 0\\
0 & 0 & 1 & 1\\
0 & 0 & 0 & 1\\
\end{pmatrix}\\
\tikzfig{pp/big/stair3} &
\qquad A^2 = 
\begin{pmatrix}
1 & 0 & 1 & 0\\
0 & 1 & 0 & 1\\
0 & 0 & 1 & 0\\
0 & 0 & 0 & 1\\
\end{pmatrix}\\
\tikzfig{pp/big/stair4} &
\qquad A^3 = 
\begin{pmatrix}
1 & 1 & 1 & 1\\
0 & 1 & 1 & 1\\
0 & 0 & 1 & 1\\
0 & 0 & 0 & 1\\
\end{pmatrix}\\
\tikzfig{pp/big/stair5} &
\qquad A^4 = 
\begin{pmatrix}
1 & 0 & 0 & 0\\
0 & 1 & 0 & 0\\
0 & 0 & 1 & 0\\
0 & 0 & 0 & 1\\
\end{pmatrix}\\
\end{align*}

\caption{How a layer of RZ gates are transformed when passing through a CNOT layer. The legs of each phase gadget can be read off vertically.}
\end{figure}

When the CNOTs are not restricted to be arranged monotonically, then the order is no longer linear in $n$. Since a swap gate can be implemented using 3 CNOTs, all permutations can be generated.
The greatest order of an element in the permutation group $S_n$ is given by Landau's function, which asymptotically grows as $e^{\sqrt{n\log n}}$ 
\autocite{landau}. This provides evidence that general parametric quantum circuits should not arrange their CNOTs monotonically, but instead have a more symmetric ans\"{a}tz by adding CNOTs from the last qubit to the first qubit.

\section{Normal Form}
We see that a normal form of phase polynomials can be achieved by dragging the phase gadgets across the layers of CNOTs.
This can be generalised to ans\"{a}tze consisting of CNOT, RZ and RX rotation gates. Although this form is less illuminating than its application to phase polynomials, it is easier to simplify than the original circuit as it is an architecture-agnostic description of the circuit. Phase gadgets are also a higher level, more concise description of the circuit than interleaving CNOTs and rotation gates.
In fact, we suggest that it is clearer and more convenient to directly use Z and X phase gadgets when describing QML circuits in many scenarios; this is analogous to writing code in a high level language instead of machine code. 

\ctikzfig{nf/bricks_nf}
\ctikzfig{nf/bricks_nf1}
\ctikzfig{nf/bricks_nf2}

\section{Initialisation Strategy}\label{init_strat}
As discussed in section \ref{barren}, randomly generated ans\"{a}tz have landscapes full of barren plateaus 
where their gradients are close to 0, making it difficult for gradient descent to find a 
good minimum. 

Grant et al.\ \autocite{init} shows that the randomly generated ans\"{a}tz can be used for learning, provided 
the right initial parameters are used. Geometrically, this means finding a starting point that is 
not in a barren plateau, so it can descend towards a good minimum. Of course, finding such a point 
in an arbitrary circuit is difficult, just like finding a starting point close to the global minimum 
is difficult. Instead, Grant et al.\ presents an initialisation strategy for a wide class of randomly 
generated ans\"{a}tz that avoids the barren plateaus as a starting point. 

First, the ans\"{a}tz is arranged in layers of unitary blocks.
$$ \tikzfig{cnotfree/d1} $$
Each unitary block is composed of smaller sub-blocks 
$$ \tikzfig{cnotfree/d2} $$
which are arranged such that $U^j_i(\theta^j_i) = U^{L-j}_i(\theta^{L-j}_i)^\dag$: 
this is done so that each unitary block evaluates to the identity, for those parameters. 
Concretely, each unitary sub-block can be generated randomly using CNOT, RZ and RX gates yielding a normal form:
$$\tikzfig{cnotfree/c1}$$
$$\tikzfig{cnotfree/c2}$$
A consequence of this normal form is that the overall ans\"{a}tz is ``CNOT-free", as the layer of CNOTs 
in each unitary sub-block necessarily cancel out:
$$\tikzfig{cnotfree/c3}$$
$$\tikzfig{cnotfree/c4}$$
$$\tikzfig{cnotfree/c5}$$

This provides more evidence that phase gadgets are a natural, architecture-agnostic way of designing and analysing QML ans\"{a}tze.

\subsection{Commutativity Relation}
\begin{theorem} \label{zxswap}
	Z phase gadgets and X phase gadgets commute if and only if they share an even number of legs. 	
	$$ 2n \: \{ \tikzfig{commute/commute1} = \tikzfig{commute/commute2} \} \: 2n $$
\end{theorem}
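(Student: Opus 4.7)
The plan is to reduce the commutation question for phase gadgets to a Pauli-string commutation computation, leveraging the Hamiltonian description of phase gadgets developed in Section 3.1. Specifically, by Stone's theorem and the differentiation computation already given, a Z phase gadget with leg set $S \subseteq \{1,\dots,N\}$ realises $\exp\!\bigl(\tfrac{i\alpha}{2}\,P_S\bigr)$, where $P_S = \bigotimes_{i=1}^{N} P_i$ with $P_i = Z$ for $i \in S$ and $P_i = I$ otherwise; an X phase gadget with leg set $T$ realises $\exp\!\bigl(\tfrac{i\beta}{2}\,Q_T\bigr)$, where $Q_T$ is the corresponding $X$-string on $T$.

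I would then compute $P_S Q_T$ qubit-by-qubit. On each wire $i$, the local factors commute unless $i \in S \cap T$, in which case $Z$ and $X$ anticommute; hence $P_S Q_T = (-1)^{|S \cap T|} Q_T P_S$, so the generators commute iff the number of shared legs is even. Since both generators are Hermitian and involutive ($P_S^2 = Q_T^2 = I$), they are simultaneously diagonalisable whenever they commute, and thus their exponentials commute precisely when the generators do. This settles the forward direction immediately. For the converse, if $|S \cap T|$ is odd, the anticommutation relation gives $e^{i\alpha/2\,P_S}\,e^{i\beta/2\,Q_T} = e^{i\beta/2\,Q_T}\,e^{-i\alpha/2\,P_S}$, which differs from $e^{i\beta/2\,Q_T}\,e^{i\alpha/2\,P_S}$ for generic $\alpha$, so the gadgets fail to commute.

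A purely diagrammatic alternative would pair up the shared legs and apply the Z-vs-X phase-gadget commutation rule from the end of Section 3.1 repeatedly: on each shared wire one fuses the Z leg of the first gadget with the X leg of the second and invokes complementarity (or equivalently the bialgebra rule) to slide one gadget past the other. The main obstacle in this approach is the bookkeeping — I would need to show carefully that pairs of shared legs produce cancelling rewrites via complementarity while a single unpaired leg leaves a non-removable controlled interaction, essentially re-deriving the Pauli sign $(-1)^{|S \cap T|}$ diagrammatically. The semantic route via Stone's theorem bypasses this combinatorial accounting and is, in my view, the most economical proof.
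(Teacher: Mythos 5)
Your main argument takes a genuinely different route from the paper. The paper's proof is purely diagrammatic: it slides the legs of one gadget through the other, observing that each shared leg spawns a wire between the two gadget ``bodies'' (via the bialgebra rule), and that an even number of such wires cancels by complementarity --- i.e.\ exactly the ``diagrammatic alternative'' you sketch but decline to carry out. Your semantic route via Stone's theorem, reducing everything to the sign $(-1)^{|S\cap T|}$ in $P_S Q_T = (-1)^{|S\cap T|} Q_T P_S$, is a legitimate and arguably cleaner proof of the ``if'' direction, and it has the advantage of making the ``only if'' direction explicit, which the paper's proof does not really address (the paper only shows the even case). The paper's approach, on the other hand, stays inside the calculus and is what actually generalises to the graph-state manipulations used immediately afterwards, so the bookkeeping you were avoiding is precisely the content the author wants on display.

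One step in your converse direction is wrong as stated: from $P_S Q_T = -Q_T P_S$ it does \emph{not} follow that $e^{i\alpha P_S/2}\, e^{i\beta Q_T/2} = e^{i\beta Q_T/2}\, e^{-i\alpha P_S/2}$. The correct flip identity is $e^{i\alpha P_S/2}\, Q_T = Q_T\, e^{-i\alpha P_S/2}$, i.e.\ conjugating the \emph{bare} Pauli through the exponential; applying it to the full exponential $e^{i\beta Q_T/2} = \cos(\beta/2) I + i\sin(\beta/2) Q_T$ only flips the sign of $\alpha$ in the $Q_T$ term, not in the identity term. The conclusion survives: expanding both products of exponentials, the commutator equals $-2\sin(\alpha/2)\sin(\beta/2)\,(P_S Q_T)$ in the odd case, which is nonzero for generic angles (and note that some genericity assumption is genuinely needed, since a gadget with $\alpha \in 2\pi\mathbb{Z}$ is the identity and commutes with everything). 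Replace the false identity with this direct expansion and your proof is complete.
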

\begin{proof}
$$ \tikzfig{gs/rg_swap_proof1} = \tikzfig{gs/rg_swap_proof2} = \tikzfig{gs/rg_swap_proof3} $$
If an even number of legs are swapped, the wires created between the ``body" of the phase gadgets disappear by complementarity.
$$ \tikzfig{gs/rg_swap_proof4} = \tikzfig{gs/rg_swap_proof5} $$
\end{proof}

This allows us to work directly with phase gadgets, with an understanding of which ones commute with each other; it does not allow us to commute rotations gates that do not 
originally commute --- if there are 
two non-commutative rotation gates adjacent on a wire, they will transform to phase gadgets sharing an odd number of legs after passing through a layer of CNOTs. If this were not the case, then the gadgets would commute: inverting the layer of CNOTs would leave the original rotation gates commuted, which would be a contradiction. This is not surprising; after all, rotation gates are special cases of phase gadgets, so the theorem should still apply.

\vspace{2mm}
\resizebox{\textwidth}{!}{\tikzfig{gs/ex1}=\tikzfig{gs/ex2}}
\begin{center}\scalebox{0.8}{=\tikzfig{gs/ex3}}\end{center}

The manipulations used in the proof of theorem \ref{zxswap} can be used to forcefully commute the gadgets 
such that the Z and X gadgets are separated. The \textit{graph state} can be thought of as a controller for the 
gadgets, and is equivalent to a superposition of separated phase gadgets. Graph states are originally used in 
measurement-based quantum computing \autocite{backens2014zx, duncan2009graph, duncan2012graphical}.

\section{Pauli Gadgets} \label{pauli_gadget}
In the paper \textit{Phase Gadget Synthesis for Shallow Circuits}, Cowtan et al.\ presents the theory behind Pauli gadgets \autocite{pauligadget}, which are a generalisation of phase gadgets with a rich set of properties. The Pauli gadgets are diagrammatically defined as 
$$\tikzfig{pauli_gadget/pg} = \tikzfig{pauli_gadget/pg1}$$
In section \ref{phase_gadget}, we that the Z phase gadget is generated by a tensor product of $\{Z, I\}$, two of the four Pauli matrices. By differentiating the Pauli gadget, we see that it is generated using all four Pauli matrices $\{X,Y,Z,I\}$.
$$\tikzfig{pauli_gadget/pg2} = \tikzfig{pauli_gadget/pg3}$$
$$\tikzfig{pauli_gadget/pg4} = \tikzfig{pauli_gadget/pg5}$$

$$\left\llbracket\tikzfig{pauli_gadget/pg1}\right\rrbracket = \exp\left(i \frac{\theta}{2} (I\otimes X\otimes Y\otimes Z)\right)$$

We have already encountered Pauli gadgets before --- the X phase gadget is a special case of the Pauli gadget where all of its legs are X legs.
$$ \tikzfig{gs/r_is_pauli1} = \tikzfig{gs/r_is_pauli2} =  \tikzfig{gs/r_is_pauli3} = \tikzfig{gs/r_is_pauli4} $$
Like phase gadgets, the Pauli gadgets interact with the CNOT and Clifford gates (rotations of the form $R(k\pi/2)$) in a nice way, shown in see figures 3.2 and 3.3. These properties have been used in the latest simplification routines of \tket \autocite{pauligadget, cowtan2020generic}.
The coloured legs of the Pauli gadgets allow us to see which gadgets commute by applying the fusion rule. This is much simpler than using traditional circuit diagram notation to analyse circuit ans\"{a}tze.

\def\YZ{\textcolor{zxblue}{-}\nodepart{two}\textcolor{zxgreen}{-}}
\def\YX{\textcolor{zxblue}{-}\nodepart{two}\textcolor{zxred}{-}}
\begin{figure}
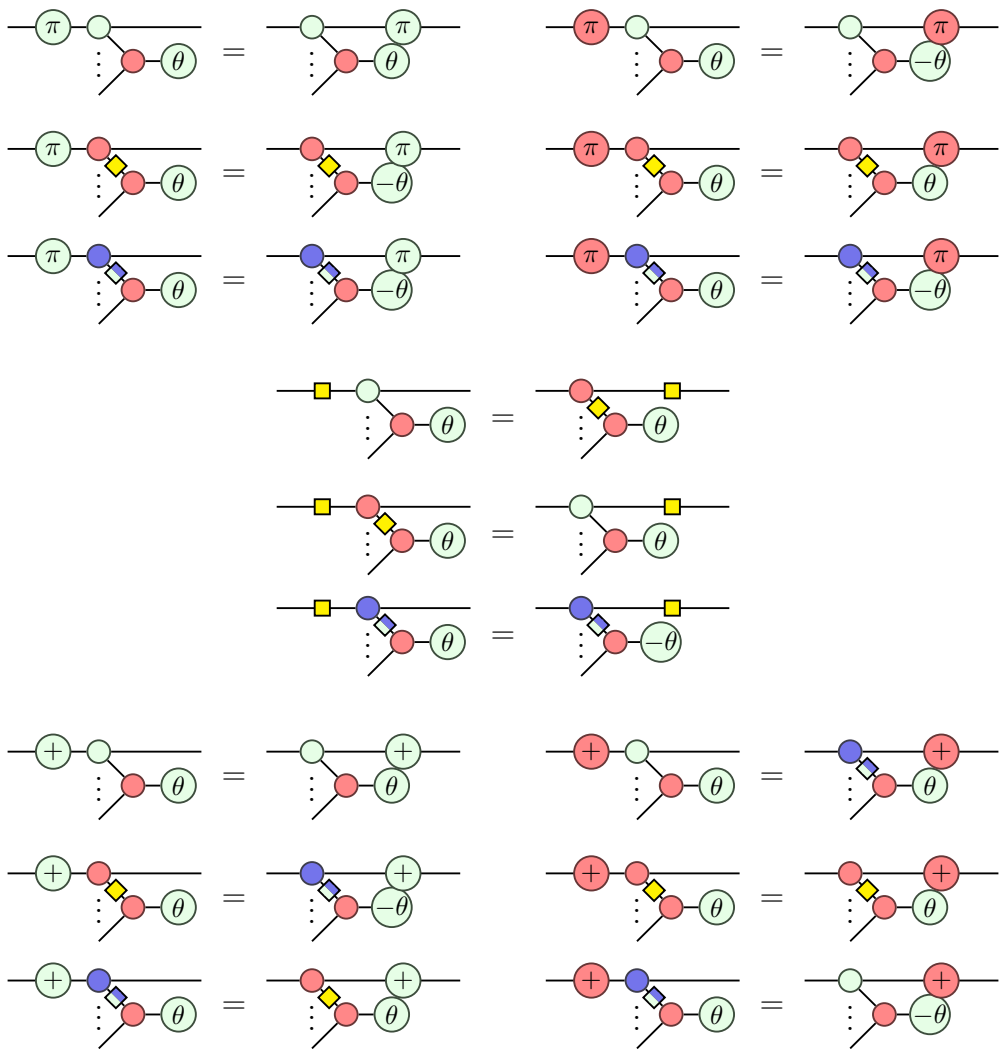

\begin{center}
$$ \tikzfig{pauli_comm/p11} = \tikzfig{pauli_comm/p12} \qquad
 \tikzfig{pauli_comm/p21} = \tikzfig{pauli_comm/p22} $$
$$ \tikzfig{pauli_comm/p13} = \tikzfig{pauli_comm/p14} \qquad
 \tikzfig{pauli_comm/p23} = \tikzfig{pauli_comm/p24} $$
$$ \tikzfig{pauli_comm/p15} = \tikzfig{pauli_comm/p16} \qquad
 \tikzfig{pauli_comm/p25} = \tikzfig{pauli_comm/p26} $$

$$ \tikzfig{pauli_comm/p31} = \tikzfig{pauli_comm/p32} $$
$$ \tikzfig{pauli_comm/p33} = \tikzfig{pauli_comm/p34} $$
$$ \tikzfig{pauli_comm/p35} = \tikzfig{pauli_comm/p36} $$

$$ \tikzfig{pauli_comm/p41} = \tikzfig{pauli_comm/p42} \qquad
 \tikzfig{pauli_comm/p51} = \tikzfig{pauli_comm/p52} $$
$$ \tikzfig{pauli_comm/p43} = \tikzfig{pauli_comm/p44} \qquad
 \tikzfig{pauli_comm/p53} = \tikzfig{pauli_comm/p54} $$
$$ \tikzfig{pauli_comm/p45} = \tikzfig{pauli_comm/p46} \qquad
 \tikzfig{pauli_comm/p55} = \tikzfig{pauli_comm/p56} $$
\end{center}
\caption{Rules for passing Clifford gates through Pauli gadgets.}
\end{figure}
\begin{figure}
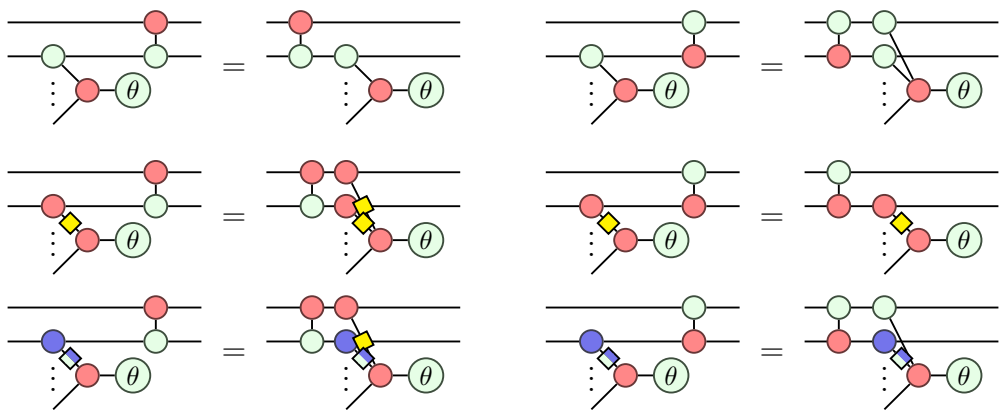

\begin{center}
$$ \tikzfig{pauli_comm/q11} = \tikzfig{pauli_comm/q12} \qquad
 \tikzfig{pauli_comm/q13} = \tikzfig{pauli_comm/q14} $$
$$ \tikzfig{pauli_comm/q21} = \tikzfig{pauli_comm/q22} \qquad
 \tikzfig{pauli_comm/q23} = \tikzfig{pauli_comm/q24} $$
$$ \tikzfig{pauli_comm/q31} = \tikzfig{pauli_comm/q32} \qquad
 \tikzfig{pauli_comm/q33} = \tikzfig{pauli_comm/q34} $$
\end{center}
\caption{Rules for passing CNOT gates through Pauli gadgets.}
\end{figure}

\section{Commutation Relations}

\begin{theorem}
Pauli Gadgets commute when they mismatch on an even number of legs.
Non-diagrammatically, they commute when their Hamiltonians commute.
\end{theorem}
\begin{proof}
There are three possible types of mismatches for the legs of the Pauli gadget: ZX, XY, and YZ. Every time a mismatched leg passes through each other, a Hadamard wire appears 
between the bodies of the Pauli gadget.
\[ \tikzfig{gs/zx_comm1} \]
\[ \tikzfig{gs/zx_comm2} =  \tikzfig{gs/zx_comm3} = \tikzfig{gs/zx_comm4} \]
\[ \tikzfig{gs/xy_comm1} \]
\[ \tikzfig{gs/xy_comm2} =  \tikzfig{gs/xy_comm3} = \tikzfig{gs/xy_comm4} \]
\[ \tikzfig{gs/yz_comm1} \]
\[ \tikzfig{gs/yz_comm2} =  \tikzfig{gs/yz_comm3} = \tikzfig{gs/yz_comm4} \]
\[ \tikzfig{gs/yz_comm5} =  \tikzfig{gs/yz_comm6} \]
When there are an even number of mismatches, the wires between the bodies of the Pauli gadget 
disappear by complementarity.
\[ \tikzfig{gs/had_comp1} = \tikzfig{gs/had_comp2} =  \tikzfig{gs/had_comp3} = \tikzfig{gs/had_comp4} \]
\end{proof}

\section{Euler Decomposition} \label{eu}

We have already defined the Hadamard gate in terms of $\pm\frac{\pi}{2}$ rotations. The decomposition of the Hadamard gate in terms of rotations of alternating bases has been studied geometrically in $SO(3)$ by Euler, and the definition of the Hadamard decomposition was introduced by Duncan and Pedrix \autocite{duncan2009graph} as an early attempt to complete the ZX calculus.  More generally, alternating rotation gates can be rewritten in terms of rotation gates with a different sequence of bases. 
There is an isomorphism between $SU(2)$ and $SO(3)/\mathbb{Z}_2$, so Euler decomposition in $SU(2)$ can be visualised using the Bloch sphere akin to how Euler decomposition can be visualised using $S^3$.
\begin{theorem}
For $a_1, a_2, a_3 \in (-\pi, \pi]$, there exists $b_1, b_2, b_3 \in (-\pi, \pi]$ such that
$$R_{x}(a_{3}) R_{z}(a_{2}) R_{x}(a_{1}) = R_{z}(b_{3}) R_{x}(b_{2}) R_{z}(b_{1})$$
\begin{center} and \end{center}
$$R_{z}(b_{3}) R_{x}(b_{2}) R_{z}(b_{1}) = R_{x}(a_{3}) R_{z}(a_{2}) R_{x}(a_{1}),$$
where $b_1, b_2, b_3$ is defined using the auxiliary variables $z_1, z_2$: 
$$z_{1} =\cos \frac{a_2}{2}\cos\left(\frac{a_{1}+a_{3}}{2}\right) +i\sin \frac{a_2}{2}\cos\left(\frac{a_{1}-a_{3}}{2}\right)$$
$$z_{2} =\cos \frac{a_2}{2}\sin\left(\frac{a_{1}+a_{3}}{2}\right) -i\sin \frac{a_2}{2}\sin\left(\frac{a_{1}-a_{3}}{2}\right)$$
\begin{align*}
    b_{1} &=\arg z_{1} +\arg z_{2}\\
    b_{2} &=\arctan\left(\frac{|z_{2} |}{|z_{1} |}\right)\\
          &=\arg( |z_{1} |+i|z_{2} |)\\
    b_{3} &=\arg z_{1} -\arg z_{2}.
\end{align*}
\end{theorem}
Diagrammatically the equations can be written as
$$ \tikzfig{eu/eq1} = \tikzfig{eu/eq2} $$
Euler decomposition is not applicable to all Pauli gadgets --- Duncan et al.\ \autocite{pauligadget} showed that three Pauli gadgets $e^{iP_1}e^{iP_2}e^{iP_3}$ are only admissible to Euler decomposition when $P_1 = P_3$, i.e. when the legs of the first and third Pauli gadgets match.

Euler decomposition allows us to study a partial redundancy/dependence between alternating phase rotations and gadgets. For example, a circuit made 
from alternating rotations can be simplified as follows:

$$ \tikzfig{eu/eu1} $$
$$ = \tikzfig{eu/eu2} = \tikzfig{eu/eu3} $$

\section{Converting Everything to Gadgets}\label{convert_ansatz}

\subsection{Universality}
So far we have only analysed circuits constructed exclusively using CNOT, RZ and RX gates and how they give rise to a normal form 
that consists of a layer of CNOTs followed by a sequence of Z and X phase gadgets. To bolster the claim that phase gadgets 
are sufficient for constructing QML ans\"{a}tze, we shall convert commonly used gates into phase gadgets. In particular, we first show that the CNOT, Hadamard and Clifford-T gates can be written using gadgets, as these gates are approximately universal and so can be used to construct any unitary, to any arbitary precision \autocite{universal1, universal2}. We then show that all three sets: phase gadgets + Hadamards, phase gadgets + X rotations and Pauli gadgets form a universal set of gates for quantum computation.

As defined in section \ref{zx def}, a Hadamard gate can be expressed as three $\pm\frac{\pi}{2}$ rotations of the same sign with alternating bases.
\begin{align*}
	\tikzfig{zx/h}=\tikzfig{zx/h_def_1} &= \tikzfig{zx/h_def_2} \\
                  =\tikzfig{zx/h_def_3} &= \tikzfig{zx/h_def_4}
\end{align*}
 This allows us to convert any circuit consisting of only phase gadgets and Hadamards to circuits consisting of only phase gadgets and X rotations. Furthermore, the X rotation gate is a special case of a Pauli gadget so the conversion from phase gadgets and X rotations to Pauli gadgets is also done. Now it suffices to show that phase gadgets + Hadamard circuits are indeed universal.

The controlled-Z (CZ) gate can be constructed using two Hadamard gates with a CNOT gate in between. Using properties of the Hadamard, it can be rewritten as 
$$  \tikzfig{zx/cz} \eq{h} \tikzfig{convert_all/cz1} $$
$$ \eq{h def} \tikzfig{convert_all/cz2}  \eq{$\star$} \tikzfig{convert_all/cz3} \eq{f} \tikzfig{convert_all/cz4} $$ 
where the equality $\star$ holds because 
$$ \tikzfig{convert_all/cz_lem1}  \eq{h} \tikzfig{convert_all/cz_lem2} \eq{h def, f} \tikzfig{convert_all/cz_lem3} $$ 
$$ \eq{$\pi$} \tikzfig{convert_all/cz_lem4}  \eq{f, scalar} \tikzfig{convert_all/cz_lem5}. $$ 
Conversely, a CNOT gate can also be constructed using two Hadamard gates with a CZ gate in between, which are both expressible using phase gadgets.
$$ \tikzfig{zx/cnot} \eq{h} \tikzfig{convert_all/cx1} $$
$$ = \tikzfig{convert_all/cx2} \eq{h def} \tikzfig{convert_all/cx3} $$ 
A more compact way to write the CNOT gate is to use a Pauli gadget, a Z rotation and a X rotation. 
$$ \tikzfig{convert_all/cx4} $$ 
\subsection{Other Gates}
Although we have shown universality, it would be useful to find compact representations of commonly used gates in quantum machine learning. Three such gates are $CU_1$, $CR_z$ and $CR_x$, 2-qubit gates used for entanglement.

\begin{center}
\includegraphics[scale=1]{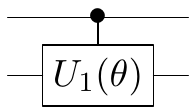}
\includegraphics[scale=1]{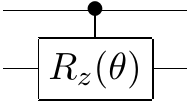}
\includegraphics[scale=1]{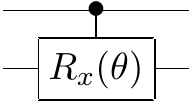}  
\end{center}
\vspace{5mm}

$$
CU_1(\theta) = 
\begin{pmatrix}
1 &  &  & \\
 & 1 &  & \\
 &  & 1 & \\
 &  &  & e^{i\theta }
\end{pmatrix} \qquad
CR_z(\theta) = 
\displaystyle \begin{pmatrix}
1 &  &  & \\
 & 1 &  & \\
 &  & e^{-i\theta/2 }& \\
 &  &  & e^{i\theta/2 }
\end{pmatrix}
$$
$$
CR_x(\theta) = 
\footnotesize
\begin{pmatrix}
1 &  &  & \\
 & 1 &  & \\
 &  & \cos(\theta/2) & -i\sin(\theta/2)\\
 &  & -i\sin(\theta/2) & \cos(\theta/2)
\end{pmatrix}
$$
Although the $U_1$ and the $R_z$ gates are considered the same up to global phase, the $CU_1$ and $CR_z$ are not.  
Fortunately, they are both diagonal matrices, so they can be expressed as a combination of Z phase gadgets. 

\vspace{5mm}
$$CU_1(\theta) = \left\llbracket \tikzfig{convert_all/cu1} \right\rrbracket$$
\vspace{2mm}
\[
\left.
\begin{aligned}
\ket{00} \rightarrow e^{-i\theta/2}e^{+i\theta/2}e^{-i\theta/2} \ket{00} \\
\ket{01} \rightarrow e^{-i\theta/2}e^{-i\theta/2}e^{+i\theta/2} \ket{01} \\
\ket{10} \rightarrow e^{+i\theta/2}e^{-i\theta/2}e^{-i\theta/2} \ket{10} \\
\ket{11} \rightarrow e^{+i\theta/2}e^{+i\theta/2}e^{+i\theta/2} \ket{11}
\end{aligned}
\right\}
= e^{-i\theta/2} 
\begin{pmatrix}
1 &  &  & \\
 & 1 &  & \\
 &  & 1 & \\
 &  &  & e^{i\theta }
\end{pmatrix}\]

\vspace{5mm}
$$CR_z(2\theta) = \left\llbracket \tikzfig{convert_all/crz} \right\rrbracket$$
\vspace{2mm}
\[
\left.
\begin{aligned}
\ket{00} \rightarrow e^{+i\theta/2}e^{-i\theta/2} \ket{00} \\
\ket{01} \rightarrow e^{-i\theta/2}e^{+i\theta/2} \ket{01} \\
\ket{10} \rightarrow e^{-i\theta/2}e^{-i\theta/2} \ket{10} \\
\ket{11} \rightarrow e^{+i\theta/2}e^{+i\theta/2} \ket{11}
\end{aligned}
\right\}
= 
\begin{pmatrix}
1 &  &  & \\
 & 1 &  & \\
 &  & e^{-i\theta }& \\
 &  &  & e^{i\theta }
\end{pmatrix}\]
\vspace{5mm}
$$ CR_x(2\theta) = \left\llbracket \tikzfig{convert_all/crx} \right\rrbracket $$
\vspace{2mm}
\[
\left.
\begin{aligned}
\ket{00} &\rightarrow \ket{0+} \rightarrow \ket{00} \\
\ket{01} &\rightarrow \ket{0-} \rightarrow \ket{01} \\
\ket{10} &\rightarrow \ket{10} + \ket{11} \rightarrow e^{-i\theta}\ket{10} + e^{i\theta}\ket{11} \\ 
	 &\rightarrow (e^{-i\theta}+e^{i\theta})\ket{10} + (e^{-i\theta}-e^{i\theta})\ket{11} \\
	 &= \cos(\theta)\ket{10} - i\sin(\theta)\ket{11} \\
\ket{11} &\rightarrow \ket{10} - \ket{11} \rightarrow e^{-i\theta}\ket{10} - e^{i\theta}\ket{11} \\ 
	 &\rightarrow (e^{-i\theta}-e^{i\theta})\ket{10} + (e^{-i\theta}+e^{i\theta})\ket{11} \\
	 &= -i\sin(\theta)\ket{10} + \cos(\theta)\ket{11} \\
\end{aligned}
\right\}
= 
\begin{pmatrix}
1 &  &  & \\
 & 1 &  & \\
 &  & \cos(\theta) & -i\sin(\theta)\\
 &  & -i\sin(\theta) & \cos(\theta)
\end{pmatrix}
\]
The $CR_x$ gate can be expressed by prepending and appending a Hadamard gate on the $CR_z$ gate. 
Curiously, the $CZ$ gate is a special case of the $CU_1$ gate rather than the $CR_z$ gate, 
and the $CNOT$ gate (also known as the $CX$ gate) does not quite correspond to the $CR_x$ gate; 
this is because the X gate and $R_x(\pi)$ are only equal up to a scalar. 

\begin{align*}
  X \triangleq \begin{pmatrix}0&1\\1&0\end{pmatrix} \neq 
  \begin{pmatrix}0&-i\\-i&0\end{pmatrix} \triangleq R_x(\pi) \\
  Y \triangleq \begin{pmatrix}0&-i\\i&0\end{pmatrix} \neq 
  \begin{pmatrix}0&-1\\1&0\end{pmatrix} \triangleq R_y(\pi) \\  
  Z \triangleq \begin{pmatrix}1&0\\0&-1\end{pmatrix} \neq 
  \begin{pmatrix}-i&0\\0&i\end{pmatrix} \triangleq R_z(\pi) 
\end{align*}
 
\newpage
\vspace*{\fill}
\begin{center}\subsection*{Sussman Attains Enlightenment}\end{center}
\noindent
{ \itshape
In the days when Sussman was a novice, Minsky once came to him as he sat hacking at the PDP-6.

\vspace{3mm}
\noindent
Minsky: ``What are you doing?" \\
Sussman: ``I am training a randomly wired neural net to play Tic-tac-toe." \\
Minsky: ``Why is the net wired randomly?" \\
Sussman: ``I do not want it to have any preconceptions of how to play." 

\vspace{3mm}
\noindent
Minsky then shut his eyes. 

\vspace{3mm}
\noindent
Sussman: ``Why do you close your eyes?" \\
Minsky: ``So that the room will be empty."

\vspace{3mm}
\noindent
At that moment, Sussman was enlightened.
}

\vspace{10mm}
What Minsky means is that wiring the circuit randomly doesn't eliminate all preconceptions, we just don't know what the preconceptions are. This philosophy also applies to quantum circuits.
Wiring your circuit randomly does not make it better. You just have less understanding of it.

In the same way that we cannot wire up our neural network randomly, we cannot escape the question of ``what makes a VQC good for learning" by simply using circuits we do not know how to analyse, or generating the circuits randomly.

\vspace*{\fill}

\chapter{The Zoo of QML Ans\"{a}tze}
Using the techniques we have developed in the previous chapters, we are now ready to simplify 
and analyse QML ans\"{a}tze. We have selected the nineteen circuits used 
by Sim et al.\ \autocite{sim2019expressibility}, many of which were from or inspired by past studies.
The four main techniques used for simplification are: spider fusion, phase gadget addition, 
Euler decomposition and single rotation decomposition. These four techniques are applied 
during both inter and intra layer simplification.

\begin{enumerate}
	\item \textbf{Spider fusion}: by writing multi-qubit gates using only X, Y and Z spiders, it is clear which 
gates commute with another. For example, the control Z rotation gates commute with each other, 
whereas the control X rotation gates and CNOT gates commute with each other only when their 
control or target legs are aligned. (See section \ref{zx}) 
	\item \textbf{Phase gadget addition}: by writing multi-qubit gates such as the controlled rotation gates
in terms of phase gadgets and rotations, the gates can partially or completely simplified. 
(See section \ref{phase_gadget})
	\item \textbf{Euler Decomposition}: Any combination of rotations gates on a qubit can be written in terms 
of three rotation gates. (See section \ref{eu}) 
	\item \textbf{Single rotation decomposition}: in practice, Z rotations are easier to perform than X rotations and Y rotations on a quantum computer \autocite{mckay2017efficient}. By strategically decomposing 
the X and Y rotation gates, we can both reduce the number of gates and use cheaper rotations. (See section \ref{y})
\end{enumerate}

The following analysis takes the original circuit, converts it to ZX calculus then applies progressive 
simplifications. A \textit{dependency graph} is included to show precisely how to obtain the new parameters 
from the original parameters. Writing out the dependency graph allows automatic differentiation on the new circuit, in terms of the old parameters; this allows to simulate quantum machine learning with the original circuit, 
using the new, simplified circuit.

Although the commutative properties of the Pauli gadget have not been used in this analysis, they can 
be useful for analysing future ans\"{a}tze that are designed directly using phase and Pauli gadgets.

\newpage
{\tabulinesep=1.2mm
\section*{Circuit 1}
\begin{center}
\raisebox{-0.5\height}{\includegraphics[scale=1.5]{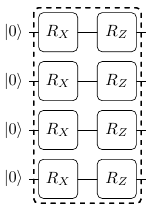}} 
\tikzfig{ansatz/new/circ1}
\tikzfig{ansatz/new/circ1_simp}
\scalebox{0.8}{\tikzfig{ansatz/new/circ1_graph}}
\end{center}
\textbf{Note}: This circuit has already been studied in Euler Decomposition section.
\section*{Circuit 2}
\begin{center}
\raisebox{-0.5\height}{\includegraphics[scale=1.5]{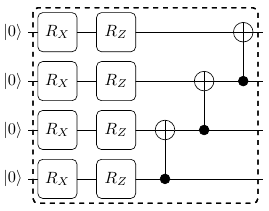}} 
\tikzfig{ansatz/new/circ2}
\end{center}
\textbf{Note}: This circuit has already been studied in the Normal Form section.
\section*{Circuit 3}
\begin{center}
{\renewcommand{\arraystretch}{1.8}
\begin{tabu}{cc}
	\raisebox{-0.5\height}{\includegraphics[scale=1.5]{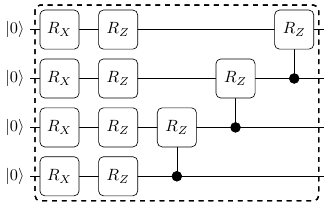}} &
	\tikzfig{ansatz/new/circ3_graph} \\
	\tikzfig{ansatz/new/circ3} &
	\tikzfig{ansatz/new/circ3_simp} \\ 
\end{tabu}} 
\end{center}

\newpage
\section*{Circuit 4}
\begin{center}
\begin{tabu}{cc}
	\raisebox{-0.5\height}{\includegraphics[scale=1.5]{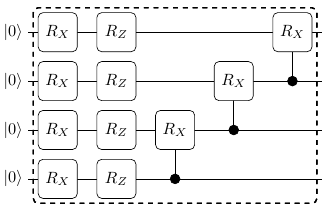}} &
	\tikzfig{ansatz/new/circ4_graph} \\
	\tikzfig{ansatz/new/circ4} &
	\tikzfig{ansatz/new/circ4_simp} \\
\end{tabu}
\end{center}
\ctikzfig{ansatz/new/circ4_simp1}

\section*{Circuit 5}
\begin{center}\includegraphics[scale=1.5]{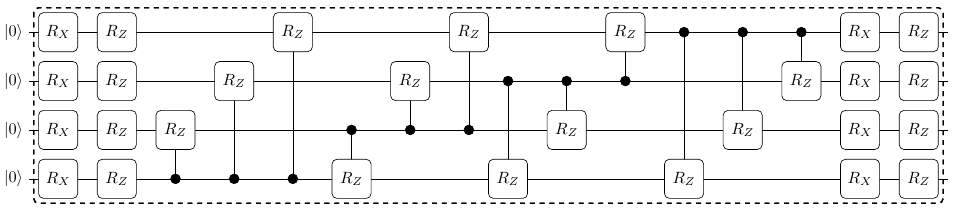}\end{center}
\ctikzfig{ansatz/new/circ5} 
\ctikzfig{ansatz/new/circ5_simp}  
\ctikzfig{ansatz/new/circ5_simp1}  
\ctikzfig{ansatz/new/circ5_simp2}  
\ctikzfig{ansatz/new/circ5_simp3}  
\ctikzfig{ansatz/new/circ5_simp4} 
\textbf{Note}: The new circuit has only 14 parameters in the repeating layer in comparison to 28 parameters in the original circuit.
\begin{center}\scalebox{0.9}{\tikzfig{ansatz/new/circ5_graph}}\end{center}
\section*{Circuit 6}
\begin{center}\includegraphics[scale=1.5]{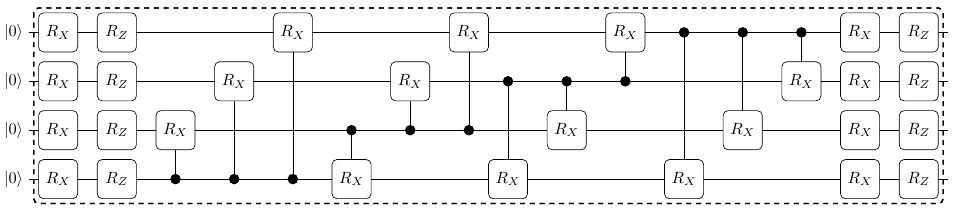}\end{center}
\ctikzfig{ansatz/new/circ6} 
\ctikzfig{ansatz/new/circ6_simp} 
\ctikzfig{ansatz/new/circ6_simp1} 
\ctikzfig{ansatz/new/circ6_simp2} 
\newpage
\vspace*{\fill}
\ctikzfig{ansatz/new/circ6_graph}
\vspace*{\fill}
\newpage
\section*{Circuit 7}
\resizebox{\textwidth}{!}{
\begin{tabu}{cc}
	\raisebox{-0.5\height}{\includegraphics[scale=1.5]{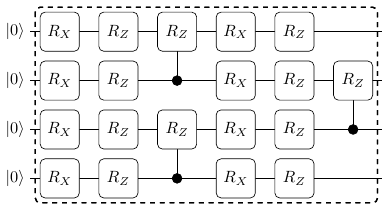}} &
	\tikzfig{ansatz/new/circ7_graph} \\
	\tikzfig{ansatz/new/circ7} &
	\tikzfig{ansatz/new/circ7_simp} \\ 
\end{tabu}}
\section*{Circuit 8}
\resizebox{\textwidth}{!}{
\begin{tabu}{cc}
	\raisebox{-0.5\height}{\includegraphics[scale=1.5]{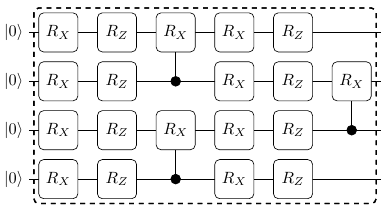}} &
	\tikzfig{ansatz/new/circ8} \\
	\tikzfig{ansatz/new/circ8_simp} & 
	\tikzfig{ansatz/new/circ8_simp1}
\end{tabu}}\\
\vspace{0.5cm}
\ctikzfig{ansatz/new/circ8_graph}
\section*{Circuit 9}
\begin{center}
\includegraphics[scale=1.5]{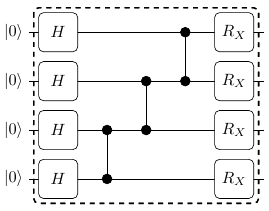}
\tikzfig{ansatz/new/circ9_simp1}
\tikzfig{ansatz/new/circ9_simp2}
\end{center}
\section*{Circuit 10}
\begin{center}
\includegraphics[scale=1.5]{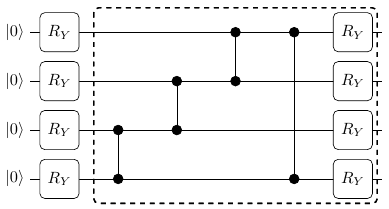} \\
\vspace{1.5mm}
\tikzfig{ansatz/new/circ10}
\tikzfig{ansatz/new/circ10_simp}
\end{center}
\section*{Circuit 11}
\begin{center}
\begin{tabu}{cc}
	\raisebox{-0.5\height}{\includegraphics[scale=1.5]{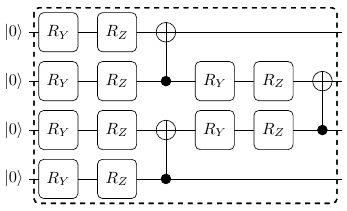}} &
	\tikzfig{ansatz/new/circ11_graph} \\
	\tikzfig{ansatz/new/circ11} &
	\tikzfig{ansatz/new/circ11_simp} \\
\end{tabu}
\end{center}
\section*{Circuit 12}
\begin{center}
\includegraphics[scale=1.5]{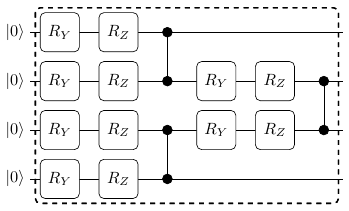} \\
\vspace{1.5mm}
\tikzfig{ansatz/new/circ12}
\tikzfig{ansatz/new/circ12_simp}
\end{center}
\section*{Circuit 13}
\begin{center}
\includegraphics[scale=1.5]{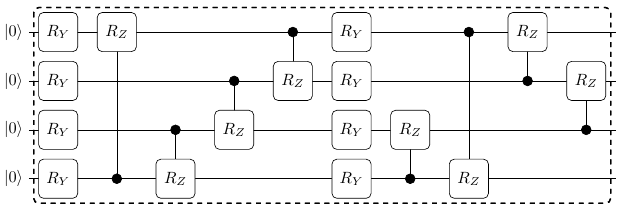}
\end{center}
\ctikzfig{ansatz/new/circ13} 
\ctikzfig{ansatz/new/circ13_simp} 
\ctikzfig{ansatz/new/circ13_simp1} 
\ctikzfig{ansatz/new/circ13_graph}
\textbf{Note}: The repeating layer of this circuit can be split into two structurally identical halves.
\section*{Circuit 14}
\begin{center}
\includegraphics[scale=1.5]{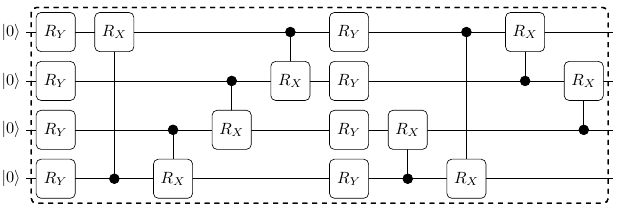}\end{center}
\ctikzfig{ansatz/new/circ14} 
\ctikzfig{ansatz/new/circ14_simp} 
\begin{center}\scalebox{0.9}{\tikzfig{ansatz/new/circ14_graph}}\end{center}
\section*{Circuit 15}
\begin{center}
\raisebox{-0.5\height}{\includegraphics[scale=1.5]{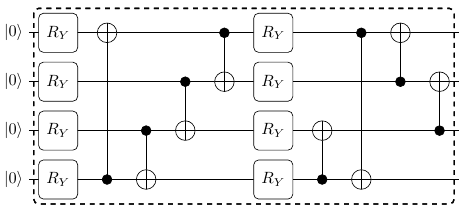}}
\tikzfig{ansatz/new/circ15} 
\end{center}
\textbf{Note}: The effect of CNOT layers on Y spiders cannot be described using $GF(2)$, 
and its periodicity has yet to be studied.
\section*{Circuit 16}
\begin{center}
\begin{tabu}{cc}
	\raisebox{-0.5\height}{\includegraphics[scale=1.45]{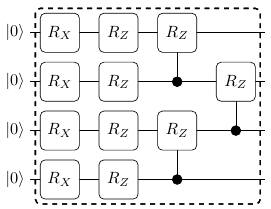}} & 
	\tikzfig{ansatz/new/circ16_graph} \\
	\tikzfig{ansatz/new/circ16} &
	\tikzfig{ansatz/new/circ16_simp} \\
\end{tabu}
\end{center}
\textbf{Note}: Circuit 16 is identical as circuit 13 after simplification.
\section*{Circuit 17}
\begin{center}
\begin{tabu}{cc}
	\raisebox{-0.5\height}{\includegraphics[scale=1.45]{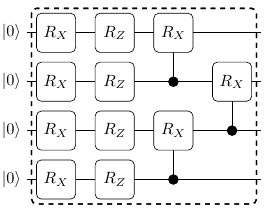}} & 
	\tikzfig{ansatz/new/circ17_graph} \\
	\tikzfig{ansatz/new/circ17} &
	\tikzfig{ansatz/new/circ17_simp} \\
\end{tabu}
\end{center}
\section*{Circuit 18}
\resizebox{\textwidth}{!}{
\begin{tabu}{cc}
	\raisebox{-0.5\height}{\includegraphics[scale=1.5]{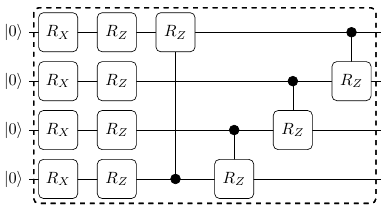}} & 
	\tikzfig{ansatz/new/circ18_graph} \\
	\tikzfig{ansatz/new/circ18} &
	\tikzfig{ansatz/new/circ18_simp} \\
\end{tabu}}
\textbf{Note}: Circuit 3 and circuit 16 are special cases of circuit 18 where $i' = 0$.
\section*{Circuit 19}
\resizebox{\textwidth}{!}{
\begin{tabu}{cc}
	\raisebox{-0.5\height}{\includegraphics[scale=1.5]{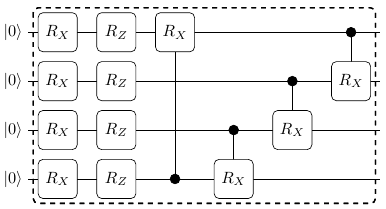}} &
	\tikzfig{ansatz/new/circ19_graph} \\
	\tikzfig{ansatz/new/circ19} &
	\tikzfig{ansatz/new/circ19_simp} \\
\end{tabu}}
}

\chapter{Optimising QML Ans\"{a}tze}
In this chapter we describe a simplification algorithm for ansätze constructed from repeating layers
of mixed Z and X phase gadgets. Our approach leverages a combinatorial description (in terms of $GL(n, 2)$
matrices) of the effect of composing such ansätze with CNOT circuits, and uses simulated annealing to obtain
approximate solutions to a corresponding optimisation problem.
\section{Background}
In the second chapter we explored how most quantum circuits used as ans\"{a}tze for quantum machine learning can be expressed using phase gadgets and Hadamard gates alone, and ans\"{a}tze constructed from CNOT gates, Z rotation gates and X rotation gates yield a normal form consisting of repeating layers of Z and X phase gadgets. In this chapter, our algorithm synthesises circuits designed using repeating Z and X phase gadgets, which can be efficiently synthesised using CNOT gates, Z rotation gates and X rotation gates. 

One way to simplify phase gadgets in ZX calculus is to apply the commutation properties between the phase gadget and the CNOT. The work by Meijer-van de Griend and Duncan \autocite{meijer2020architecture} applied this commutation property for Z phase gadgets to synthesise phase polynomials. This is done by first describing the phase polynomial using Z phase gadgets, which can be stored as a binary matrix where each column represents a phase gadget. The CNOT placements are determined by the bitwise operations used to eliminate 1 entries in the matrix.

The same idea can be applied to circuits consisting of Z and X gadgets, as shown in the following example:
\[\tikzfig{routine/demo1}\]
\[ =\tikzfig{routine/demo2} \]
\section{Abstraction}
Circuits on $n$ qubits composed of Z and X phase gadgets can be represented as a sequence of entries, each entry detailing the basis of the gadget (Z or X), the phase (angle), and the positions of its legs (an $n$-dimensional binary vector, i.e. a vector in $GF(2)^n$).
Our simplification algorithm does not alter the relative ordering of the gadgets or their angles, so a circuit on $n$ qubit with $d = d_z+d_x$ gadgets can be modelled by 3 pieces of data:
\begin{itemize}
    \item An $n \times d_z$ binary matrix $L_Z \in GF(2)^{n \times d_z}$ with columns encoding the leg positions of the $d_z$ Z phase gadgets.
    \item An $n \times d_x$ binary matrix $L_X \in GF(2)^{n \times d_x}$ with columns encoding the leg positions of the $d_x$ X phase gadgets.
    \item A sequence $S \in \left([0,2\pi)\times\{\text{Z},\text{X}\}\right)^{d}$ (with $d_z$ Z entries and $d_x$ X entries) encoding the basis and angles for the gadgets (not changed by the simplification algorithm).
\end{itemize}
The following example (with $n=5$, $d=5$, $d_x=2$ and $d_y=3$) shows how the $L_Z$ and $L_X$ matrices used by the algorithm are derived from the sequence of gadgets (angles omitted for clarity):
\begin{center}
    \texttt{[['Z', (1,1,0)], ['X', (1,1,1)], ['X', (1,1,0)], ['Z', (1,0,0)], ['Z', (1,1,0)]]} \\
    \vspace{0.5cm}
    $L_Z = \begin{pmatrix}1&1&1\\1&0&1\\0&0&0\end{pmatrix}$ \quad
    $L_X = \begin{pmatrix}1&1\\1&1\\1&0\end{pmatrix}$ \\
\end{center}
The algorithm is based on the action of CNOTs on circuits composed entirely of one kind of phase gadgets (either Z or X), as specified by the commutation rules above.
Specifically, it is based on the way in which a circuit of CNOTs changes the legs of a phase gadget as the CNOTs are commuted from the right of the phase gadget to its left (noting that the circuit of CNOTs is left unchanged by the commutation process).

Because the commutation of a circuit $\mathcal{C}$ of CNOTs through a phase gadget is an invertible procedure (inverted by using $\mathcal{C}^\dagger$), its action on the legs of a gadget must correspond to an invertible binary matrix in $GL(n,2)$: we write $h_z(\mathcal{C}) \in GL(n,2)$ for the matrix defining the action on Z phase gadgets and $h_x(\mathcal{C}) \in GL(n,2)$ for the matrix defining the action on X phase gadgets.
Given a circuit $\mathcal{C}$, the matrices $h_z(\mathcal{C})$ and $h_x(\mathcal{C})$ can be constructed by looking at the action of the CNOTs on binary vectors $GF(2)^n$ encoded by X and Z basis vectors respectively:

\vspace{.5cm}
\[
    h_z\left(\left\llbracket\tikzfig{cnots/cnot2}\right\rrbracket\right)
    =
    \begin{pmatrix}1&1\\0&1\end{pmatrix}_z
\]\[
    h_x\left(\left\llbracket\tikzfig{cnots/cnot1}\right\rrbracket\right)
    =
    \begin{pmatrix}1&0\\1&1\end{pmatrix}_x
\]
\vspace{.5cm}
\[
    h_z\left(\left\llbracket\tikzfig{cnots/cnot2big}\right\rrbracket\right)
    =
    \begin{pmatrix}	1&1&0\\0&1&1\\1&1&1 \end{pmatrix}_z
\]
\[
    h_x\left(\left\llbracket\tikzfig{cnots/cnot2bigflip}\right\rrbracket\right)
    =
    \begin{pmatrix}	0&1&1\\1&1&0\\1&1&1 \end{pmatrix}_x
    \vspace{.5cm}
\]

\begin{theorem} \label{inv_trans}
    The actions $h_z(\mathcal{C})$ and $h_x(\mathcal{C})$ of a CNOT circuit $\mathcal{C}$ are related by inverse transpose in $GL(n,2)$:
    $$
        h_x(\mathcal{C}) = \left(h_z(\mathcal{C})^T\right)^{-1}
    $$
\end{theorem}
\begin{proof}
	Let $\mathcal{C}=\prod^n_{i=1}\mathcal{C}_i$ be a CNOT circuit and $\{\mathcal{C}_i\}$ be 
	the individual CNOT gates. Since $h_z$ and $h_x$ are group homomorphisms, 
	the binary matrices $L_z$ and $L_x$ can be decomposed as 
	a product of the binary matrices, each representing a CNOT.
	
	$$	
	h_x(\mathsf{CNOT}(x,y))_{ij} = 
	\begin{cases}
	    1 \qquad i=j \\
	    1 \qquad i=x \text{ and } j=y \\
	    0 \qquad \text{otherwise}
	\end{cases}
	$$
	$$
	h_z(\mathsf{CNOT}(x,y))_{ij} = 
	\begin{cases}
	    1 \qquad i=j \\
	    1 \qquad i=y \text{ and } j=x \\
	    0 \qquad \text{otherwise}
	\end{cases}
	$$

	$$ L_z = h_z(\mathcal{C}) = h_z\left(\prod_i \mathcal{C}_i\right) = \prod_k h_z(\mathcal{C}_k) $$
	$$ L_x = h_x(\mathcal{C}) = h_x\left(\prod_i \mathcal{C}_i\right) = \prod_k h_x(\mathcal{C}_k) $$

	Looking at the definition of $h_z$ and $h_x$, it is evident that $h_z(\mathsf{CNOT}(x,y)) = h_x(\mathsf{CNOT}(y,x)) = h_x(\mathsf{CNOT}(x,y))^T$.	Since the CNOT gate is self inverting, so are the matrices resultant from applying $h_z$ and $h_x$: 
	$$ h_z(\mathsf{CNOT}(x,y))^2 = h_z(\mathsf{CNOT}(x,y)^2) = h_z(\mathsf{id}) = I $$
	$$ h_x(\mathsf{CNOT}(x,y))^2 = h_x(\mathsf{CNOT}(x,y)^2) = h_x(\mathsf{id}) = I $$

	Therefore we can conclude that  $h_z(G_i) = h_x(G_i)^{{-1}^T}$. 
	Now both the transpose and the inverse operations are contravariant, 
	but combine to make a covariant operation on matrices.
	Therefore
\begin{align*}
	\left(L_z^T\right)^{-1} &= \left(\left(L_{z_1}L_{z_2}\ldots L_{z_n}\right)^{T}\right)^{-1}\\ 
		     &= (L_{x_n}\ldots L_{x_2}L_{x_1})^{-1} \\
		     &= (L_{x_1}L_{x_2}\ldots L_{x_n}) = L_x.
\end{align*}
\end{proof}
\noindent Luckily for us, there are no restrictions on which $GL(n,2)$ matrices correspond to the action of CNOT circuits: for any invertible binary matrix in $GL(n, 2)$, we can construct a CNOT circuit with that action using row operations. Kissinger and Meijer-Van de Griend \autocite{kissinger2019cnot} proposed the Steiner-Gauss algorithm, which is an architecture-aware method of obtaining such a CNOT circuit. 

\begin{figure}
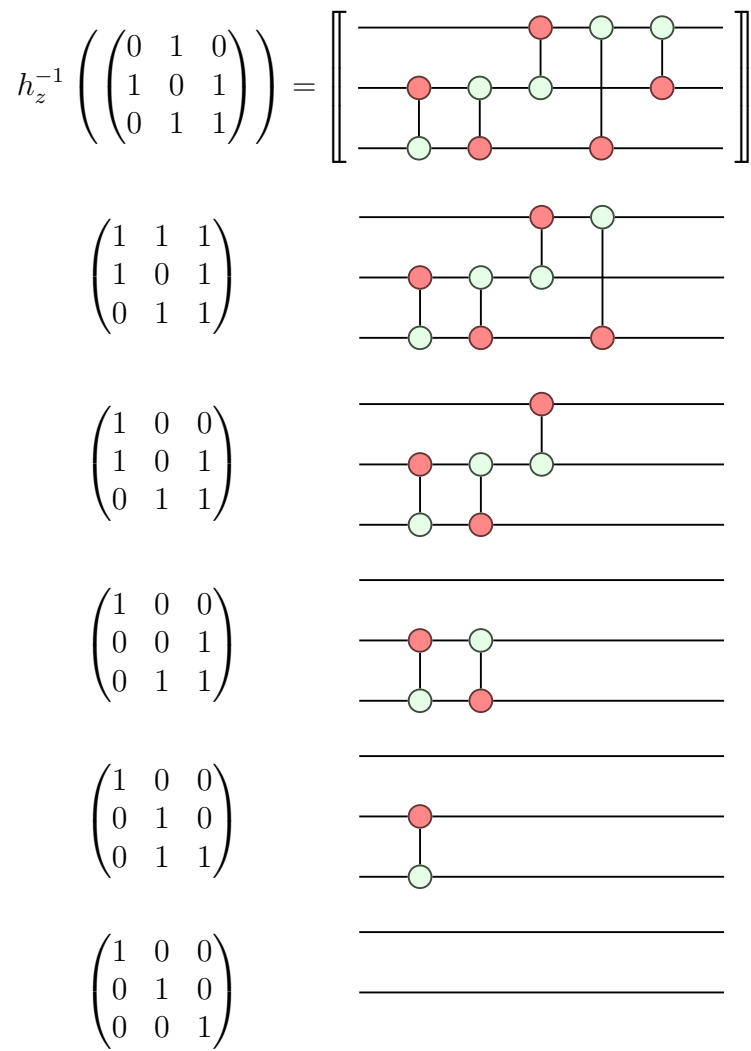

	\centering
\begin{align*}
	h^{-1}_z\left(\begin{pmatrix} 0&1&0\\1&0&1\\0&1&1 \end{pmatrix}\right) = 
	&\left\llbracket\tikzfig{cnots/c6}\right\rrbracket \\[1em]
	\begin{pmatrix} 1&1&1\\1&0&1\\0&1&1 \end{pmatrix} \hspace{1cm}& \quad\tikzfig{cnots/c5}\\[1em]
	\begin{pmatrix} 1&0&0\\1&0&1\\0&1&1 \end{pmatrix} \hspace{1cm}& \quad\tikzfig{cnots/c4}\\[1em]
	\begin{pmatrix} 1&0&0\\0&0&1\\0&1&1 \end{pmatrix} \hspace{1cm}& \quad\tikzfig{cnots/c3}\\[1em]
	\begin{pmatrix} 1&0&0\\0&1&0\\0&1&1 \end{pmatrix} \hspace{1cm}& \quad\tikzfig{cnots/c2}\\[1em]
	\begin{pmatrix} 1&0&0\\0&1&0\\0&0&1 \end{pmatrix} \hspace{1cm}& \quad\tikzfig{cnots/c1}
\end{align*}

\caption{An example of how an invertible matrix in $GF(2)$ can be converted to a CNOT circuit. The algorithm uses row operations to obtain the identity matrix, then adding the CNOTs as we unwind the recursion stack.}
\end{figure}

This bijective correspondence between $GL(n,2)$ matrices and CNOT circuits allows for a neat formulation of the action of CNOT circuits on $(n, d_z, d_x)$ Z/X phase gadget circuits as the following representation of $GL(n,2)$:
$$
\begin{matrix}
    h_z \oplus h_x: & GL(n,2) & \longrightarrow & GL\left(GF(2)^{n\times d_z} \oplus GF(2)^{n\times d_x}, 2\right)\\
    & C & \mapsto & C\oplus \left(C^T\right)^{-1}
\end{matrix}
$$

the Z and X phase gadgets are acted on independently, so we have omitted the interleaving and angle information $S$.
In this formulation, circuit simplification with the aim of reducing the total number of phase gadget legs results in the following optimization problem:
\begin{problem*}
    Given an $n\times d_z$ binary matrix $L_Z \in GF(2)^{n \times d_z}$ and an $n\times d_x$ binary matrix $L_X \in GF(2)^{n \times d_x}$, minimize the sum of the number of 1 entries in the two matrices $C \cdot L_Z$ and $\left(C^T\right)^{-1} \cdot L_X$ across all invertible binary matrices $C \in GL(n,2)$.
\end{problem*}
\noindent This is a discrete (non-differentiable) problem, which we believe to be hard in the general case.
As a consequence, we decided to employ simulated annealing \autocite{SA1,SA2} to obtain approximately optimal solutions.
Below is an example solution $C$ for the parameters used before:
\begin{center}
    \texttt{[['Z', (1,1,0)], ['X', (1,1,1)], ['X', (1,1,0)], ['Z', (1,0,0)], ['Z', (1,1,0)]]} \\
    \vspace{0.5cm}
    $L_Z = \begin{pmatrix}1&1&1\\1&0&1\\0&0&0\end{pmatrix}$ \quad
    $L_X = \begin{pmatrix}1&1\\1&1\\1&0\end{pmatrix}$ \\
    \vspace{0.5cm}
    $C = \begin{pmatrix}1&1&0\\0&1&0\\0&0&1\end{pmatrix}$ \qquad
    $C\cdot L_Z = \begin{pmatrix}0&1&0\\1&0&1\\0&0&0\end{pmatrix}$ \quad
    $\left(C^T\right)^{-1} \cdot L_X = \begin{pmatrix}1&1\\0&0\\1&0\end{pmatrix}$ \\
\end{center}
Given one such solution $C$, the optimized circuit is obtained as follows:
\begin{enumerate}
    \item Synthesise a CNOT circuit from $C$ using the Steiner-Gauss algorithm.
    \item Append the Z and X phase gadgets with legs specified by $C \cdot L_Z$ and $\left(C^T\right)^{-1} \cdot L_X$ respectively, using the original interleaving and angles stored in the sequence $S$.
    \item Synthesise and append a CNOT circuit from $C^\dagger$ using the Steiner-Gauss algorithm.
\end{enumerate}
The intermediate phase-gadget layer can be synthesised in an architecture-aware fashion using the procedure from \autocite{architecture}, or converting CNOT ladders into tree form to reduce depth.
The intermediate layer can also be repeated arbitrarily many times without repeating the initial and final CNOT blocks: this yields (linearly) compounded gains when dealing with ans\"{a}tze with several repeated layers, as common in QML \autocite{expressive,reinforcement,transfer}.

\section{Simulated Annealing}
Simulated annealing is a non--gradient-based, probabilistic optimisation method, inspired by the spontaneous emergence of ordered, energy minimizing configurations during annealing processes in metallurgy.
The essence of the algorithm is as follows:
\begin{enumerate}
    \item Start with a random initial point $C_0$ in the parameter space. In our case, start with a random matrix $C_0$ in parameter space $GL(n, 2)$.
    \item For each iteration $k=0,...,K-1$:
    \begin{enumerate}
        \item Compute a temperature $T$ according to a specified \emph{temperature schedule}.
        \item Randomly select a neighbour of the current point in configuration space. In our case, randomly select a matrix $C_{new}$ in $GL(n, 2)$ obtained by flipping one entry of the current matrix $C$.
        \item Transition from $C$ to $C_{new}$ with probability dependent on the change in energy $e(C_{new})-e(C)$, according to a specified \emph{energy function}. In our case, $\mathrm{energy}(C)$ is the sum of 1 entries in  $C \cdot L_Z$ and $\left(C^T\right)^{-1} \cdot L_X$.
    \end{enumerate}
    \item Return the current point $C$ in parameter space at the end of the iterations.
\end{enumerate}

\begin{algorithm}
\caption{Simulated Annealing}
\begin{algorithmic}[1]

\Procedure{Simulated\_Annealing}{$C_0$}       
    \State $C \leftarrow C_0$
    \For{$k = 0, \ldots, K-1$}
	\State $T \leftarrow \mathrm{temperature\_at}(k)$
	\State $C_{new} \leftarrow \mathrm{random\_neighbour\_of}(C)$
	\State $p \leftarrow \mathbb{P}(\text{transition} \:|\: \mathrm{energy}(C), \mathrm{energy}(C_{new}), T)$
	\If{$p \leq \mathrm{random}(0,1)$}
		\State $C \leftarrow C_{new}$
	\EndIf
    \EndFor
    \State \textbf{output} $C$
\EndProcedure

\end{algorithmic}
\end{algorithm}
\noindent For our problem, we chose the following temperature schedule and transition probability:
$$
    T := T_0 \left(1-\frac{k}{K}\right)
    \hspace{2cm}
    \mathbb{P}(\text{transition})
    :=
    \min\left(1, \exp\left(\frac{e(C)-e(C_{new})}{T}\right)\right)
$$
The neighbouring matrix $C_{new}$ is obtained by flipping a randomly chosen entry in $C$, but care must be taken to ensure that $C_{new}$ is invertible: this is done by rejection sampling: (The asymptotic probability that a uniformly sampled binary matrix is invertible is 0.288 \autocite{invertible}.) 
\section{Implementation Details}
Unfortunately, \texttt{numpy} does not support $GF(2)$ as a data type, so we work with the \texttt{numpy.uint64} data type and take the remainder mod 2 after matrix multiplication or matrix inversion. To check whether a matrix is invertible, we just need to check that its determinant is odd. From the \emph{Hadarmard determinant problem}, we know that the maximum determinant achievable by a $n \times n$ binary matrix grows exponentially \autocite{brenner1972hadamard}:
$$
    \det(C) \leq \frac{(n+1)^{(n+1)/2}}{2^n}
$$
Unsigned integer overflow is not undefined behaviour in C, the underlying language used in \texttt{numpy}, so it is perfectly acceptable to compute the determinant of large matrix modulo $2^{64}$ this way. Unfortunately, \texttt{numpy}'s optimised determinant method \texttt{np.linalg.det} computes the determinant via LU factorisation using the LAPACK routine \texttt{z/dgetrf} \autocite{det}, which uses doubled precision floating point numbers instead of integers \autocite{lapack}. IEEE-754 gives double precision floating point numbers 53 bits of mantissa, so matrices with $n \leq 33$ will not have determinant overflow problems.
Furthermore, because binary matrix rank in $\mathbb{R}$ is not the same as binary matrix rank in $GF(2)$, we cannot use \texttt{np.linalg.matrix\_rank} to check for invertibility.  
Instead, we used Gaussian elimination to compute the rank and invert the matrix in $GF(2)$. 

In practice, quantum circuit synthesis libraries such as \tket \autocite{tket}, Qiskit \autocite{Qiskit} and PyZX \autocite{pyzx} only allow users to add primitive gates such as CNOT and rotation gates. To apply our algorithm, we must be able 
to detect phase gadgets and find repeating layers of them when they exist. The detection of phase gadgets can be done by pushing the position of the CNOTs onto a stack, then popping them off after reaching the rotation gate. Every CNOT removed from the stack becomes a leg in the phase gadget. The detection of repeating layers of phase gadgets can be done using the Knuth-Morris-Pratt string matching algorithm \autocite{kmp}. 

\section{Evaluation Results}
First we evaluate the effectiveness of simulated annealing on the proposed problem. To do this we generated binary matrices $L_z$ and $L_x$ uniformly at random, then used simulated annealing to find a good $C$ to minimise the sum. We perform univariate analysis on the four following parameters:

\begin{enumerate}
    \item Attempts: The number of times simulated annealing is run, taking the result with the lowest objective score. 
    \item Iterations: The number of steps taken in one simulated annealing attempt.
    \item Width: The width of matrices $L_z$ and $L_x$, which corresponds to the number of gadgets. 
    \item Height: The height of matrices $L_z$, $L_x$ and $C$. This corresponds to the number of qubits in the circuit.
\end{enumerate}
\begin{figure}
\resizebox{\textwidth}{!}{
\begin{tabular}{|c|c|c|c|}
\hline
Attempts & Iterations & Width & Height \\
\hline
(20, 20, 20, 10, 6000) & (20, 20, 20, 20, 5000) & 
(20, 40, 40, 10, 5000) & (40, 20, 20, 10, 5000) \\
(20, 20, 20, 10, 5000) & (20, 20, 20, 10, 5000) & 
(20, 20, 20, 10, 5000) & (20, 20, 20, 10, 5000) \\
(20, 20, 20, 10, 4000) & (20, 20, 20,  5, 5000) & 
(20, 10, 10, 10, 5000) & (10, 20, 20, 10, 5000) \\
\hline
\end{tabular}}

\resizebox{\textwidth}{!}{
\begin{tabular}{cc}
\includegraphics[width=\textwidth/2]{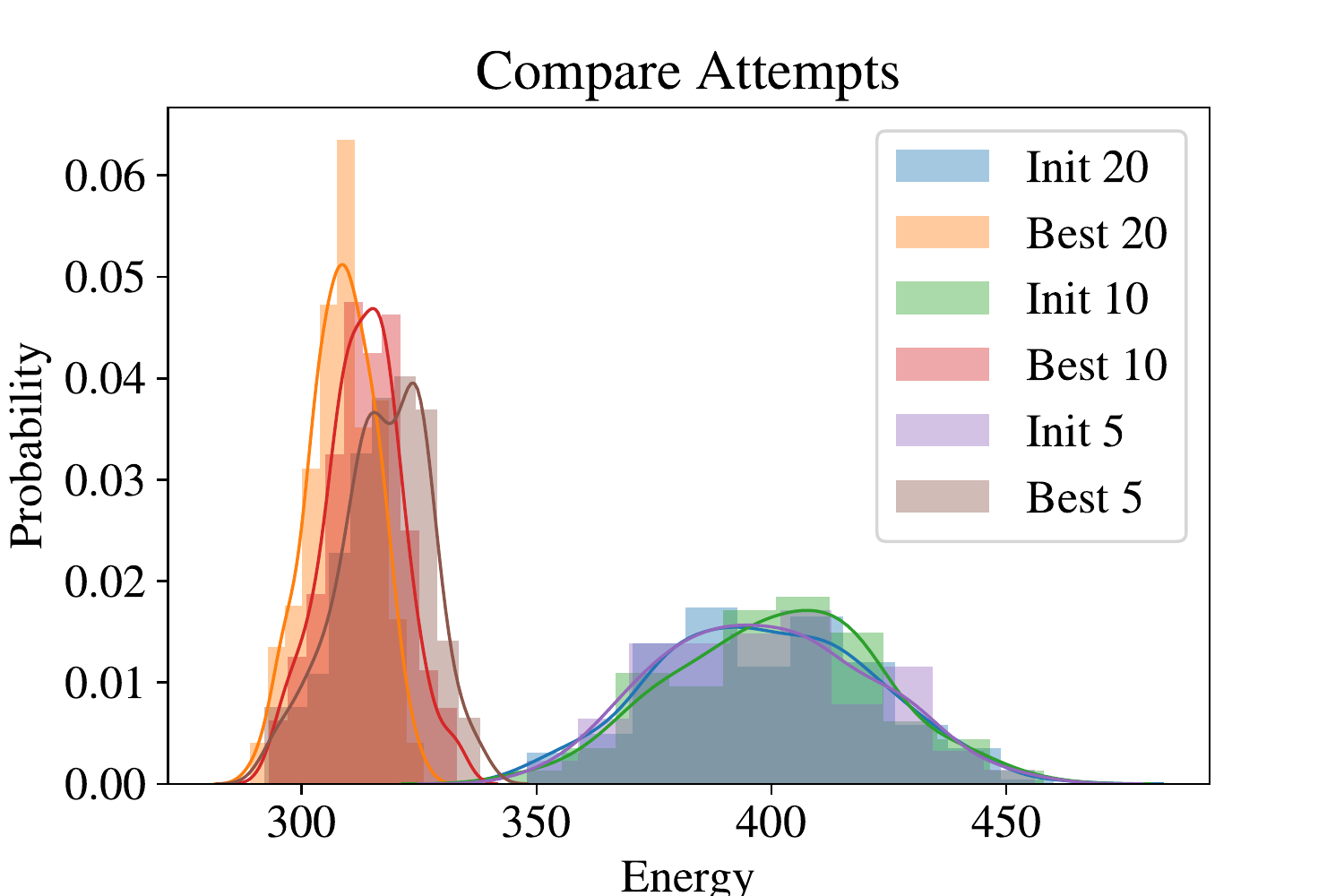} & 
\includegraphics[width=\textwidth/2]{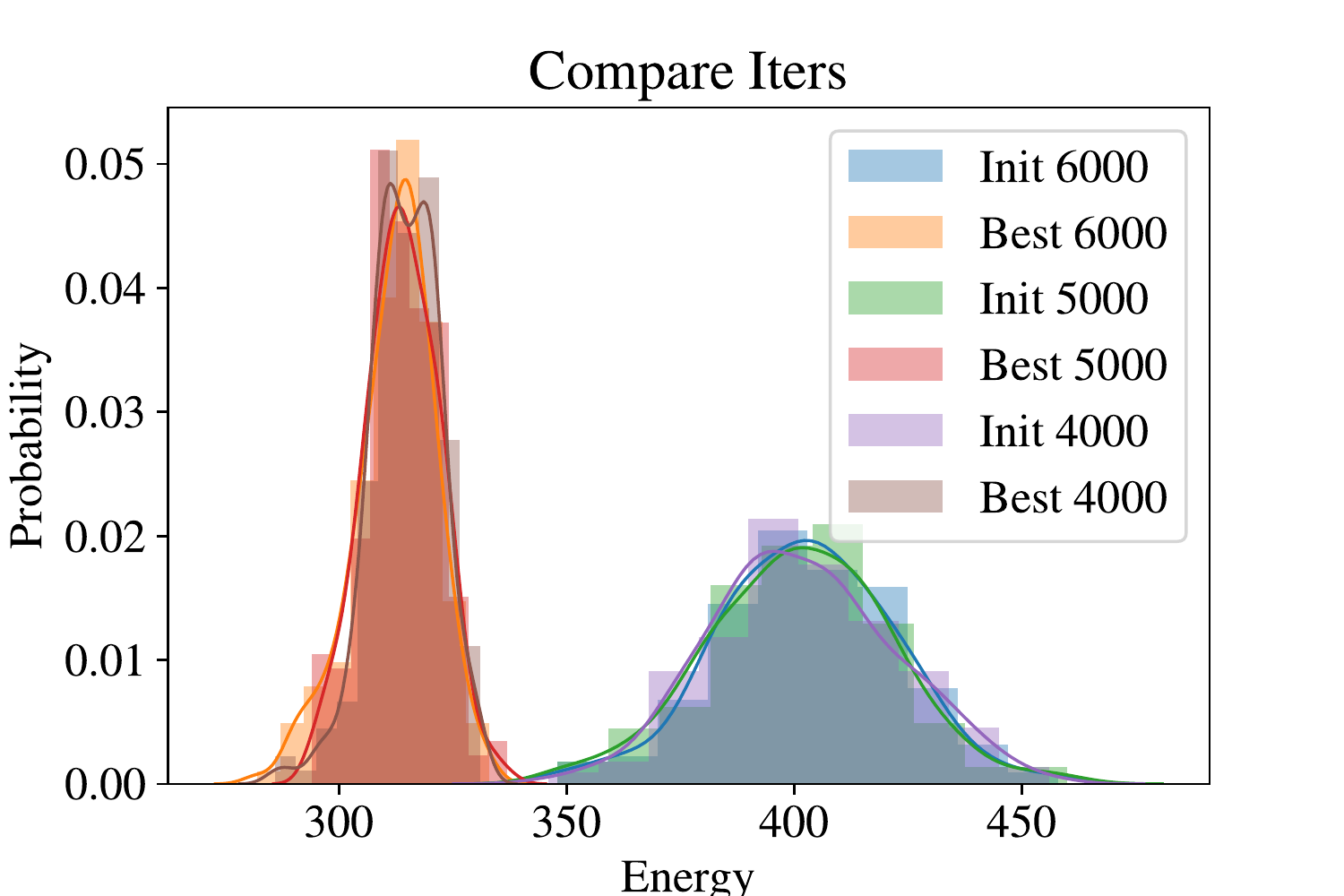} \\
\includegraphics[width=\textwidth/2]{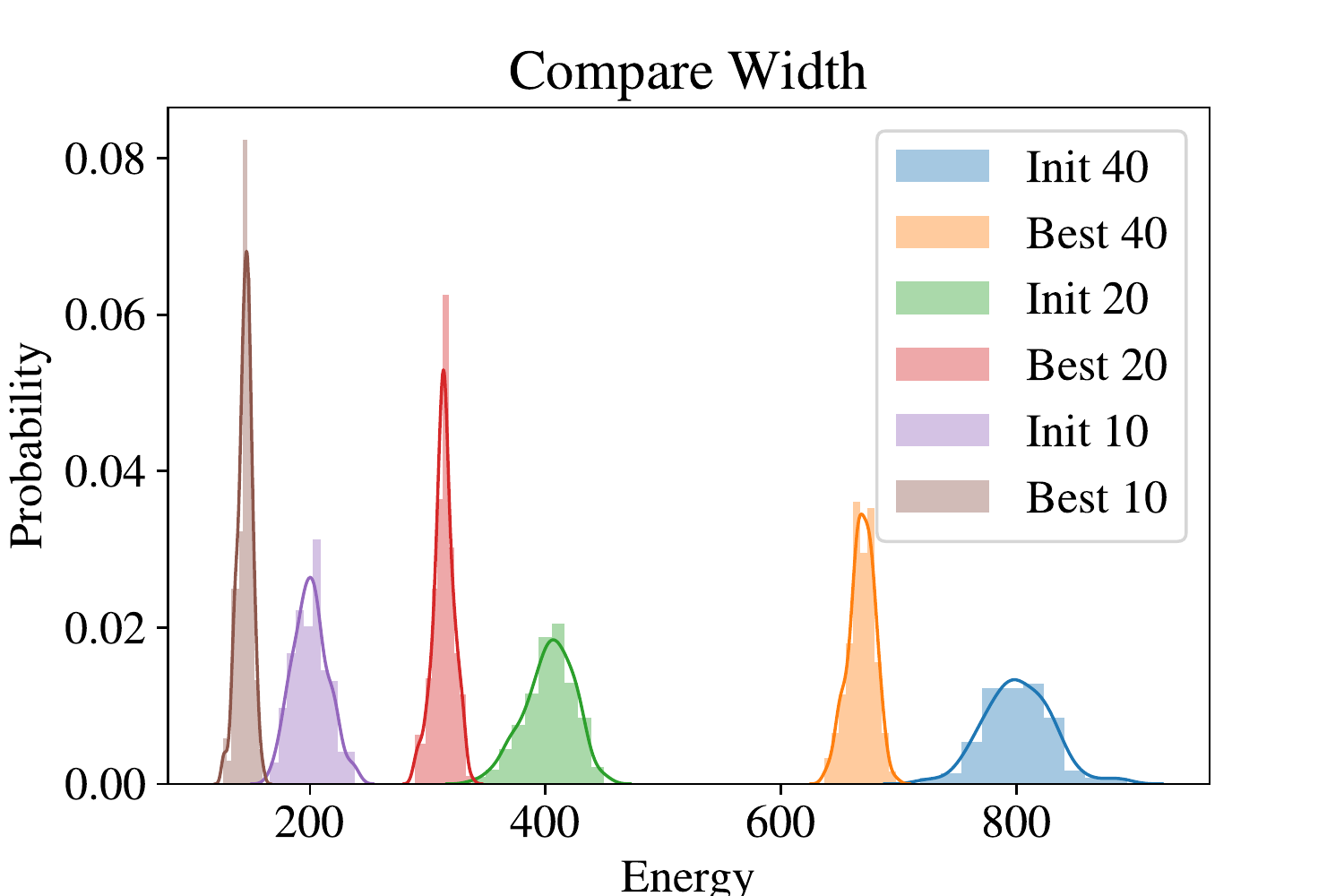} & 
\includegraphics[width=\textwidth/2]{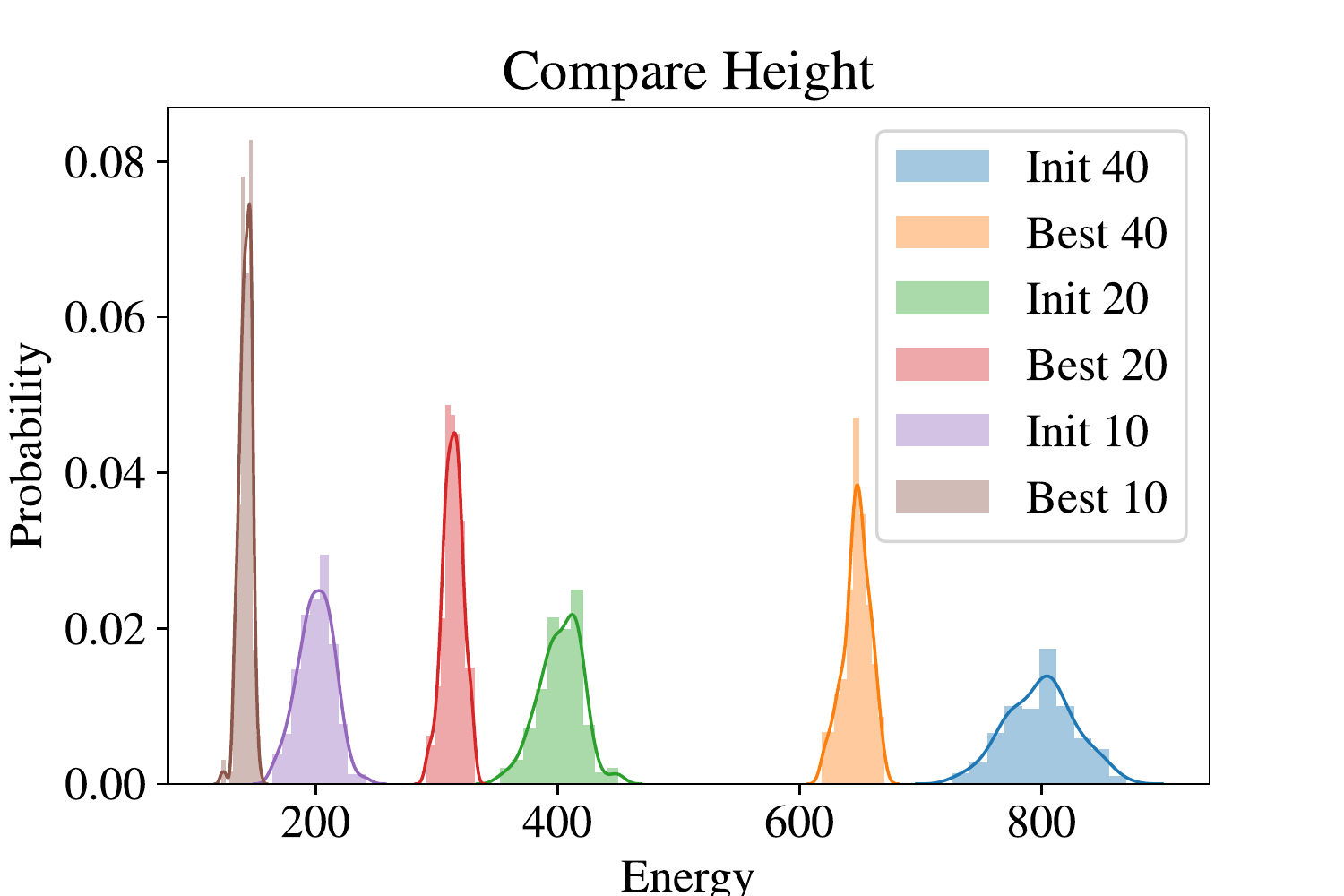} 
\end{tabular}}

\caption{The parameters used to benchmark the performance of simulated annealing (top), and the distribution of scores (bottom).}
\end{figure}

As expected, the mean of the scores increase as the number of attempts increase. The percentage improvement of simulated annealing decreases as the size of the problem increases, but perhaps this can be ameliorated by changing the hyperparameters.

To evaluate the performance of our simplification algorithm we have applied it to random ans\"{a}tze with a varying a number of Z and X gadgets per layer, as well as a varying number of repeating layers. For each \texttt{(n\_gadgets, n\_layers)} configuration pair, we sampled 10 ans\"{a}tze on 8 qubits with uniform probability and we computed the mean CNOT depth and count, for both the original circuits and the simplified circuits:

\begin{figure}
    \centering
    \resizebox{\textwidth}{!}{
    \begin{tabular}{|c||ccc|ccc|ccc|ccc|}
     \hline
     \multicolumn{13}{|c|}{CNOT depth} \\
     \hline
     \# gadgets& \multicolumn{3}{c|}{1 layer}
               & \multicolumn{3}{c|}{2 layers}
               & \multicolumn{3}{c|}{5 layers}
               & \multicolumn{3}{c|}{10 layers} \\
    per layer & before & after & $\Delta$\% & before & after & $\Delta$\% & before & after & $\Delta$\% & before & after & $\Delta$\%\\
    \hline
    10 & 125 & 111 & 11\% & 258 & 146 & 44\% & 597 & 228 & 62\% & 1146 & 389 & 66\% \\
    20 & 261 & 192 & 26\% & 486 & 276 & 43\% & 1159 & 595 & 49\% & 2388 & 1121 & 53\% \\
    40 & 498 & 367 & 26\% & 1014 & 647 & 36\% & 2296 & 1518 & 34\% & 4818 & 2946 & 39\% \\
    80 & 988 & 763 & 23\% & 1924 & 1463 & 24\% & 4986 & 3479 & 30\% & 9521 & 6999 & 26\% \\
    160 & 1974 & 1619 & 18\% & 3906 & 3200 & 18\% & 9476 & 7726 & 18\% & 19259 & 15707 & 18\% \\    
    \hline
    \end{tabular}}

    \vspace{5mm}

    \resizebox{\textwidth}{!}{
    \begin{tabular}{|c||ccc|ccc|ccc|ccc|}
     \hline
     \multicolumn{13}{|c|}{CNOT count} \\
     \hline
     \# gadgets& \multicolumn{3}{c|}{1 layer}
               & \multicolumn{3}{c|}{2 layers}
               & \multicolumn{3}{c|}{5 layers}
               & \multicolumn{3}{c|}{10 layers} \\
    per layer & before & after & $\Delta$\% & before & after & $\Delta$\% & before & after & $\Delta$\% & before & after & $\Delta$\%\\
    \hline
    10 & 120 & 149 & worse & 249 & 185 & 26\% & 580 & 261 & 55\% & 1136 & 414 & 64\% \\
    20 & 252 & 233 & 8\% & 472 & 324 & 31\% & 1154 & 663 & 43\% & 2416 & 1172 & 51\% \\
    40 & 490 & 414 & 15\% & 988 & 692 & 30\% & 2292 & 1580 & 31\% & 4768 & 3056 & 36\% \\
    80 & 976 & 804 & 18\% & 1908 & 1526 & 20\% & 4950 & 3598 & 27\% & 9496 & 7072 & 26\% \\
    160 & 1953 & 1672 & 14\% & 3868 & 3249 & 16\% & 9382 & 7825 & 17\% & 19100 & 15750 & 18\% \\
    \hline
    \end{tabular}}
    \caption{A summary for the performance of the optimisation algorithm on randomly sampled circuits.}
\end{figure}

\noindent When layers are not repeated, the results indicate that savings in CNOT count and depth from phase gadget simplification are mostly offset by the introduction of the CNOT blocks.
However, the savings increase up to 60\% as the number of repeating layers is increased, because the CNOT blocks are not repeated.
The optimisation problem becomes progressively more difficult as the number of gadgets per layer increases, resulting in worse optimization results (all other things held equal).
However, tweaking the parameters of simulated annealing improves the results.

\begin{landscape}

\begin{figure}
\centering
\begin{varwidth}{\linewidth}
\begin{verbatim}
Initial Circuit:
Circuit on 6 qubits with 64 gates.
12 is the T-count
52 Cliffords among which 
52 2-qubit gates and 0 Hadamard gates.
\end{verbatim}
\end{varwidth}
$$\tikzfig{routine/real1}$$
\begin{varwidth}{\linewidth}
\begin{verbatim}
Output Circuit:
Circuit on 6 qubits with 54 gates.
12 is the T-count
42 Cliffords among which 
42 2-qubit gates and 0 Hadamard gates.
\end{verbatim}
\end{varwidth}
$$\tikzfig{routine/real2}$$

\caption{Example execution with \texttt{n\_qubits = 6, n\_gadgets = 6, n\_layers = 2}}
\end{figure}
\end{landscape}

\chapter{Conclusion}

\section{Summary of Ideas}
\subsection*{New Notation}
Due to the prevalence of Y rotation gates in QML literature, we introduced a trichromatic notation that allows a compact representation for gates in the Y basis: the graphical depiction of XY and YZ Hadamard gates are designed to illustrate the effect when taking the transpose, the adjoint and the conjugate. The Y spider and the Y Hadamards were used again when providing a graphical representation of the Pauli gadget.

To achieve diagrammatic quantum machine learning, we also introduced notation for differentiating diagrams in ZX calculus. This allowed us to present a typical framework for quantum machine learning and also find the Hamiltonian generator of a one parameter unitary group. In particular, we were able to first present the phase and Pauli gadgets diagrammatically, then use diagrammatic differentiation to express them as matrix exponentials.

\subsection*{QML Reference}

Besides presenting a typical framework for quantum machine learning, we also converted commonly used gates into ZX calculus. We advocate the use of phase gadgets and Pauli gadgets to represent and analyse QML ans\"{a}tze due to their intuitive commutation properties, their elegant symmetric representation, and their susceptibility to optimisation. To demonstrate the generality of this approach, we dedicated a chapter to converting nineteen circuits from QML literature to Pauli gadgets and applying simplifications to them. A particularly successful example is circuit 5, which reduced the number of parameters from 28 to 14.

\subsection*{Abstraction of Phase Gadgets}

The commutation properties of the CNOT and rotation gates allow us to write phase gadgets and CNOTs in terms of binary vectors and binary matrices in $GL(n, 2)$. This allowed us to compute the manner in which phase gadgets transform through a block of CNOTs with ease. Since a monotonic block of CNOTs correspond to a triangular matrix, by proving the group order of such matrices we were able to demonstrate that ans\"{a}tze that arrange their CNOTs monotonically have a linear periodicity rather than the super-polynomial periodicity achievable.

We also developed a simplification algorithm for a special class of circuits constructed using only Z and X phase gadgets. This algorithm utilises the isomorphism between the group actions of CNOT blocks on Z and X phase gadgets, both representable in $GL(n, 2)$. Using simulated annealing, we achieved up to a 60\% of CNOT count and depth in this class of commonly used circuits. This algorithm will be presented at the 4th International Workshop of Quantum Compilation.



\section{Future Work}

The work in chapter 4 can be developed into a full-fledged quantum machine learning tool that allows the user to design, optimise and analyse ans\"{a}tze. The dependency graph produced during the optimisation progress allows the algorithm to perform automatic differentiation and perform QML on the same landscape but using a simpler circuit. Furthermore, the Jacobian between the old and new parameters will shed light on which parameters affect the circuit output the most.

The optimisation algorithm on Z and X phase gadgets has room for refinement and adaptation. It may be possible to estimate the cost of synthesising the binary matrix use in the optimisation and including that cost in the objective function. More experimentation will reveal whether the savings of this algorithm will carry through to more restrictive machine architectures. Due to the cost of the naive Gaussian elimination algorithm used to perform the matrix inverse, computing the objective function is time consuming; a more optimised implementation is necessary before it can be used in a production quantum compiler.

\printbibliography

\newpage
\begin{appendices}
\chapter{Supplementary Calculations}
\subsection*{ZX Calculus}
This is a justification for the rules given by the ZX calculus. The result that diagrams are equal up to spatial isotopy is proved in \textit{Categories for Quantum Theory} \autocite{heunen2019categories}. \\

\textbf{Fusion Rule}: In the case where the number of output wires in the first spider matches the number of output wires in the second spider:
\[
(e^{-i\frac{\beta}{2}}\ket{0\ldots 0}\bra{0\ldots 0} + e^{i\frac{\beta}{2}}\ket{1\ldots 1}\bra{1\ldots 1}) \circ
(e^{-i\frac{\alpha}{2}}\ket{0\ldots 0}\bra{0\ldots 0} + e^{i\frac{\alpha}{2}}\ket{1\ldots 1}\bra{1\ldots 1})
\]
\[=(e^{-i\frac{\alpha+\beta}{2}}\ket{0\ldots 0}\bra{0\ldots 0} + e^{i\frac{\alpha+\beta}{2}}\ket{1\ldots 1}\bra{1\ldots 1})\]
The general case is difficult to articulate directly using Dirac notation. Instead, we can verify the identity by comparing the result of basis states $\ket{0\ldots 0}$ and $\ket{1\ldots 1}$, as all other basis states map to the zero vector:

\[ \ket{0\ldots 0} \mapsto e^{-i\frac{\alpha}{2}}\ket{0\ldots 0} \mapsto  e^{-i\frac{\alpha+\beta}{2}}\ket{0\ldots 0} \] 
\[ \ket{1\ldots 1} \mapsto e^{i\frac{\alpha}{2}}\ket{1\ldots 1} \mapsto  e^{i\frac{\alpha+\beta}{2}}\ket{1\ldots 1} \] 

\textbf{Hadamard Rule}: 
\[ H^{\otimes n} (e^{-i\frac{\alpha}{2}}\ket{0\ldots 0}\bra{0\ldots 0} + e^{i\frac{\alpha}{2}}\ket{1\ldots 1}\bra{1\ldots 1}) H^{\otimes m} \]
\[= e^{-i\frac{\alpha}{2}}\ket{{+}\ldots {+}}\bra{{+}\ldots {+}} + e^{i\frac{\alpha}{2}}\ket{{-}\ldots {-}}\bra{{-}\ldots {-}} \]

\textbf{Identity Rules}: follow easily from the definitions
$\pi$-copy rule: The X $\pi$ rotation acts as the NOT gate in the Z basis.
\[ X^{\otimes n} (e^{-i\frac{\alpha}{2}}\ket{0\ldots 0}\bra{0\ldots 0} + e^{i\frac{\alpha}{2}}\ket{1\ldots 1}\bra{1\ldots 1}) X^{\otimes m} \]
\[= e^{-i\frac{\alpha}{2}}\ket{1\ldots 1}\bra{1\ldots 1} + e^{i\frac{\alpha}{2}}\ket{0\ldots 0}\bra{0\ldots 0} \]

\textbf{Copy Rule}: The 1-legged X spider is just the pure state $\ket{0}$, and $(\ket{0\ldots 0}\bra{0} + \ket{1\ldots 1}\bra{1}) \ket{0} = \ket{0\ldots 0}$.

\textbf{Bialgebra Rules}: The 0 phase Z spider acts as a ``COPY" gate and the 0 phase X spider acts as a ``XOR" gate --- this can be verified by checking the computational basis. The bialgebra rule is the statement ``Copying the $m$ input bits to be XORed on the $n$ outputs is the same as XORing the $m$ inputs and copying it to the $n$ outputs."

\subsection*{The Rotation Matrices are Similar}

According to the PennyLane documentation, the rotation matrices are defined through the matrix exponential.
\begin{align*}
R_{x}( \phi ) &= e^{-i\phi X /2}\\
& =\begin{pmatrix}
\cos(\phi/2) & -i\sin(\phi/2)\\
-i\sin(\phi/2) &  \cos(\phi/2)\\
\end{pmatrix} \\
R_{y}( \phi ) &= e^{-i\phi Y /2}\\
& =\begin{pmatrix}
\cos(\phi/2) & -\sin(\phi/2)\\
\sin(\phi/2) &  \cos(\phi/2)\\
\end{pmatrix} \\
R_{z}( \phi ) &= e^{-i\phi Z /2}\\
& =\begin{pmatrix}
e^{-i\phi /2} & 0\\
0 & e^{i\phi /2}
\end{pmatrix}
\end{align*}
In the same way that the Pauli matrices are similar matrices, the X, Y, Z rotation gates are 
related through a change of basis:
\begin{align*}
R_{x}( \phi ) & =\begin{pmatrix}
\cos( \phi /2) & -i\sin( \phi /2)\\
-i\sin( \phi /2) & \cos( \phi /2)
\end{pmatrix}\\
 & =\frac{1}{2}\begin{pmatrix}
1 & 1\\
1 & -1
\end{pmatrix}\begin{pmatrix}
e^{-i\phi /2} & 0\\
0 & e^{i\phi /2}
\end{pmatrix}\begin{pmatrix}
1 & 1\\
1 & -1
\end{pmatrix}\\
 & =HR_{z}( \phi ) H \\
R_{y}( \phi ) & =\begin{pmatrix}
\cos( \phi /2) & -\sin( \phi /2)\\
\sin( \phi /2) & \cos( \phi /2)
\end{pmatrix}\\
 & =\frac{1}{2}\begin{pmatrix}
1 & i\\
i & 1
\end{pmatrix}\begin{pmatrix}
e^{-i\phi /2} & 0\\
0 & e^{i\phi /2}
\end{pmatrix}\begin{pmatrix}
1 & -i\\
-i & 1
\end{pmatrix}\\
 & =R_{x}( -\pi /2) R_{z}( \phi ) R_{x}( \pi /2)\\
 & =\frac{1}{2}
{\footnotesize\begin{pmatrix}
1-i & 0\\
0 & 1+i
\end{pmatrix}\begin{pmatrix}
\cos(\phi/2) & -i\sin(\phi/2)\\
-i\sin(\phi/2) & \cos(\phi/2)
\end{pmatrix}\begin{pmatrix}
1+i & 0\\
0 & 1-i
\end{pmatrix}}\\
 & =R_{z}( \pi /2) R_{x}( \phi ) R_{z}( -\pi /2)
\end{align*}

\section*{Y Basis}
\begin{align*}
H_{XY} &= \begin{pmatrix}1&i\\i&1\end{pmatrix} \begin{pmatrix}1&1\\1&-1\end{pmatrix} \begin{pmatrix}1&-i\\-i&1\end{pmatrix}\\
	   &= \begin{pmatrix}1&i\\i&1\end{pmatrix} \begin{pmatrix}1-i&1-i\\1+i&-1-i\end{pmatrix} = \begin{pmatrix}&1-i\\1+i&\end{pmatrix} \\ 
H_{YZ} &= \begin{pmatrix}1&\\&i\end{pmatrix} \begin{pmatrix}1&1\\1&-1\end{pmatrix} \begin{pmatrix}1&\\&-i\end{pmatrix}\\
	   &= \begin{pmatrix}1&\\&i\end{pmatrix} \begin{pmatrix}1&-i\\1&i\end{pmatrix} = \begin{pmatrix}1&-i\\i&-1\end{pmatrix} \\
\end{align*}

\section*{Proof of Euler Decomposition} \label{euler}
\setcounter{theorem}{3}
\begin{theorem}
For $a_1, a_2, a_3 \in (-\pi, \pi]$, there exists $b_1, b_2, b_3 \in (-\pi, \pi]$ such that
$$R_{x}(a_{3}) R_{z}(a_{2}) R_{x}(a_{1}) = R_{z}(b_{3}) R_{x}(b_{2}) R_{z}(b_{1})$$
\begin{center} and \end{center}
$$R_{z}(b_{3}) R_{x}(b_{2}) R_{z}(b_{1}) = R_{x}(a_{3}) R_{z}(a_{2}) R_{x}(a_{1}),$$
where $b_1, b_2, b_3$ is defined using the auxiliary variables $z_1, z_2$: 
$$z_{1} =\cos \frac{a_2}{2}\cos\left(\frac{a_{1}+a_{3}}{2}\right) +i\sin \frac{a_2}{2}\cos\left(\frac{a_{1}-a_{3}}{2}\right)$$
$$z_{2} =\cos \frac{a_2}{2}\sin\left(\frac{a_{1}+a_{3}}{2}\right) -i\sin \frac{a_2}{2}\sin\left(\frac{a_{1}-a_{3}}{2}\right)$$
\begin{align*}
    b_{1} &=\arg z_{1} +\arg z_{2}\\
    b_{2} &=\arctan\left(\frac{|z_{2} |}{|z_{1} |}\right)\\
          &=\arg( |z_{1} |+i|z_{2} |)\\
    b_{3} &=\arg z_{1} -\arg z_{2}
\end{align*}
\end{theorem}

\begin{proof}
    Define $\displaystyle \alpha _{i} =\frac{1}{2} a_{i}$ and $\displaystyle \beta _{i} =\frac{1}{2} b_{i}$ for typesetting convenience. Expanding the definitions of $R_z(\theta)$ and $R_x(\theta)$ gives us the two matrices in explicit form.
\begin{gather*}
R_{XZX}(a_{1} ,a_{2} ,a_{3}) =R_{x}(a_{3}) R_{z}(a_{2}) R_{x}(a_{1})\\
=\begin{pmatrix}
\cos \alpha _{1} & -i\sin \alpha _{1}\\
-i\sin \alpha _{1} & \cos \alpha _{1}
\end{pmatrix}\begin{pmatrix}
e^{-i\alpha _{2}} & 0\\
0 & e^{i\alpha _{2}}
\end{pmatrix}\begin{pmatrix}
\cos \alpha _{3} & -i\sin \alpha _{3}\\
-i\sin \alpha _{3} & \cos \alpha _{3}
\end{pmatrix}\\
=\footnotesize{\begin{pmatrix}
e^{-i\alpha _{2}}\cos \alpha _{1}\cos \alpha _{3} -e^{i\alpha _{2}}\sin \alpha _{1}\sin \alpha _{3} & -i\left[ e^{-i\alpha _{2}}\cos \alpha _{1}\sin \alpha _{3} +e^{i\alpha _{2}}\sin \alpha _{1}\cos \alpha _{3}\right]\\
-i\left[ e^{i\alpha _{2}}\cos \alpha _{1}\sin \alpha _{3} +e^{-i\alpha _{2}}\sin \alpha _{1}\cos \alpha _{3}\right] & e^{i\alpha _{2}}\cos \alpha _{1}\cos \alpha _{3} -e^{-i\alpha _{2}}\sin \alpha _{1}\sin \alpha _{3}
\end{pmatrix}}\\
=\begin{pmatrix}
z^{*}_{1} & -iz^{*}_{2}\\
-iz_{2} & z_{1}
\end{pmatrix}
\end{gather*}

\begin{gather*}
R_{ZXZ}( b_{1} ,b_{2} ,b_{3}) =R_{z}( b_{3}) R_{x}( b_{2}) R_{z}( b_{1})\\
=\begin{pmatrix}
e^{-i\beta _{1}} & 0\\
0 & e^{i\beta _{1}}
\end{pmatrix}\begin{pmatrix}
\cos( \beta _{2}) & -i\sin( \beta _{2})\\
-i\sin( \beta _{2}) & \cos( \beta _{2})
\end{pmatrix}\begin{pmatrix}
e^{-i\beta _{3}} & 0\\
0 & e^{i\beta _{3}}
\end{pmatrix}\\
=\begin{pmatrix}
e^{-i( \beta _{1} +\beta _{3})}\cos \beta _{2} & e^{-i( \beta _{1} -\beta _{3})}\sin \beta _{2}\\
e^{i( \beta _{1} -\beta _{3})}\sin \beta _{2} & e^{i( \beta _{1} +\beta _{3})}\cos \beta _{2}
\end{pmatrix}
\end{gather*}
By comparing magnitude and modulus of the matrix entries, we obtain the following equations which can be solved to find $b_1, b_2, b_3$.
\begin{align*}
    \alpha _{1} +\alpha _{3} &=\arg z_{1} &\sin \alpha _{2} &=|z_{2} |\\
    \alpha _{1} -\alpha _{3} &=\arg z_{2} &\cos \alpha _{2} &=|z_{1} |
\end{align*}
To check that the second equation holds, we use the fact that $H^2 = I$ and $R_x(\theta) = HR_z(\theta)H$ to apply a Hadamard change of basis to both sides. In ZX calculus, this corresponds to conjugating the two colours.
$$R_{x}(a_{3}) R_{z}(a_{2}) R_{x}(a_{1}) = R_{z}(b_{3}) R_{x}(b_{2}) R_{z}(b_{1})$$
$$HR_{x}(a_{3}) R_{z}(a_{2}) R_{x}(a_{1})H = HR_{z}(b_{3}) R_{x}(b_{2}) R_{z}(b_{1})H$$
$$R_{z}(a_{3})HR_{z}(a_{2})HR_{z}(a_{1}) = R_{x}(b_{3})HR_{x}(b_{2})HR_{x}(b_{1})$$
$$R_{z}(a_{3})R_{x}(a_{2})R_{z}(a_{1}) = R_{x}(b_{3})R_{z}(b_{2})R_{x}(b_{1})$$
\end{proof}
\end{appendices}
\end{document}